\def\phi{\varphi}
\providecommand{\acknowledgments}[1]{\textbf{Acknowledgments.} #1}
\newcommand{\myDots}{\ifmmode\mathinner{\ldotp\kern-0.1em\ldotp\kern-0.1em\ldotp}\else.\kern-0.1em.\kern-0.3em.\fi}
\newcommand{\lem}{Lem.}
\newcommand{\fig}{Fig.}
\newcommand{\thm}{Thm.}
\newcommand{\sect}{Sect.}
\newcommand{\dfn}{Def.}
\newcommand{\alg}{Alg.}
\newcommand{\app}{App.}
\newcommand{\ifandonlyif}{iff }
\newcommand{\iffi}{\textit{iff} }
\definecolor{tim}{RGB}{0, 0, 250}
\newcommand{\conp}{\mathrm{CoNP}}
\newcommand{\np}{\mathrm{NP}}
\newcommand{\pspace}{\mathrm{PSPACE}}
\newcommand{\cf}{f_{\ast}}
\newcommand{\clr}[1]{\widetilde{#1}}
\newcommand{\size}[1]{|#1|}
\newcommand{\card}[1]{|#1|}
\newcommand{\len}[1]{|#1|} %{\ell(#1)}
\newcommand{\sufo}[1]{\mathrm{sufo}(#1)}
\newcommand{\prc}[1]{\mathrm{pcs}(#1)}
\newcommand{\bigo}{\mathcal{O}}
\newcommand{\sar}{\vdash}
\newcommand{\ns}{\Phi}
\newcommand{\nsii}{\Psi}
\newcommand{\sprset}{\Gamma}
\newcommand{\nseq}{\Delta}
\newcommand{\sep}{\vdash} %{ \ \| \ }
\newcommand{\sprel}{\preceq^{*}_{\sprset}}
\newcommand{\langv}{\mathcal{L}_{\vocab}}
\newcommand{\langsharp}{\mathcal{L}_{\preceq}}
\newcommand{\spbox}[1]{\Box\nolimits_{#1}}
\newcommand{\spdia}[1]{\Diamond\nolimits_{#1}}
\newcommand{\sharper}{\preceq}
\newcommand{\spv}{\PropStandpointLogic(\vocab)}
\def\id{(id)}
\def\conr{(\land)}
\def\disr{(\lor)}
\newcommand{\spboxr}[1]{(\Box_{#1})}
\newcommand{\spdiar}[1]{(\Diamond_{#1})}
\newcommand{\spdiari}[1]{(\Diamond_{#1}^{1})}
\newcommand{\spdiarii}[1]{(\Diamond_{#1}^{2})}
\newcommand{\nr}[1]{(n_{#1})}
\newcommand{\nspv}{\mathsf{NS}(\vocab)}
\newcommand{\prove}{\mathtt{Prove}_{\vocab}}
\newcommand{\proofsearch}{\mathtt{ProofSearch}_{\vocab}}
\newcommand{\true}{\mathtt{True}}
\newcommand{\false}{\mathtt{False}}
\newcommand{\scid}{id}
\newcommand{\sccon}{\land}
\newcommand{\scdis}{\lor}
\newcommand{\scspbox}{\spbox{s}}
\newcommand{\scspdia}{\spdia{s}}
\newcommand{\scns}{n_{s}}
\newcommand{\scastdia}{\spdia{\ast}}
\newcommand{\act}{\circ}
\newcommand{\inact}{\bullet}
\newcommand{\model}{\mathcal{M}}
\newcommand{\vocab}{\mathcal{V}}
\newcommand{\pset}{\mathcal{P}}
\newcommand{\spset}{\mathcal{S}}
\newcommand{\dst}{\ast}
\newcommand{\pred}[1]{\small\mbox{\tt #1}}
\newcommand{\spform}[1]{\mathsf{#1}}
\newcommand{\mathcom}[3]{ \newcommand{#1}[#2]{\mbox{$#3$}}}
\mathcom{\rimp}{0}{\ \leftarrow\ }            % implication arrow
\mathcom{\con}{0}{\ \wedge\ }                 % conjunction
\mathcom{\dis}{0}{\ \vee\ }                   % disjunction
\mathcom{\n}{0}{\neg}                     % negation
\mathcom{\dimp}{0}{\ \leftrightarrow\ }       % mat equiv
\mathcom{\corresponds}{0}{\ \Lleftarrow\! \! \Rrightarrow\ }
\mathcom{\A}{0}{\forall}                  % universal quantifier
\mathcom{\E}{0}{\exists}     % existential quant
\def\Box{\mathop\square}
\def\Diamond{\mathop\lozenge}
\mathcom{\tuple}{1}{\langle #1 \rangle}
\def\textquote#1{`#1'}
\def\PropStandpointLogic{{\mathbb{S}}}
\def\s#1{\hbox{$\mathsf{#1}$}\xspace}
\def\star{\hbox{$*$}\xspace}
\def\standb#1{\Box\nolimits_{\spform{#1}}\xspace}
\def\standd#1{\Diamond\nolimits_{\spform{#1}}\xspace}
\def\allstandb{\standb{*}}
\def\allstandd{\standd{*}}
\def\modelclass{\mathfrak{M}_{\PropStandpointLogic}}
\def\model{\mathcal{M}}
\def\f\xspacestandtopre{\hbox{$\sigma\,$}\xspace}
\def\fpretov\xspacealue{\hbox{$\delta\,$}\xspace}
\def\ModSat#1||-#2{#1\models #2}
\def\NotModSat#1||-#2{#1\nvDash #2}
\newtheorem{definition}{Definition}
\newtheorem{corollary}{Corollary}
\newtheorem{lemma}{Lemma}
\newtheorem{example}{Example}
\newtheorem{theorem}{Theorem}
\title{Automating Reasoning with Standpoint Logic via Nested Sequents}
\author{ }
\author{%
Tim S. Lyon \and Lucía Gómez Álvarez\\
\affiliations Computational Logic Group, Faculty of Computer Science, TU Dresden
\emails
\{timothy\_stephen.lyon, lucia.gomez\_alvarez\}@tu-dresden.de
}
\begin{document}

\maketitle

\begin{abstract}
Standpoint logic is a recently proposed formalism in the context of knowledge integration, which advocates a multi-perspective approach permitting reasoning with a selection of diverse and possibly conflicting standpoints rather than forcing their unification. In this paper, we introduce nested sequent calculi for propositional standpoint logics---proof systems that manipulate trees whose nodes are multisets of formulae---and show how to automate standpoint reasoning by means of non-deterministic proof-search algorithms. To obtain worst-case complexity-optimal proof-search, we introduce a novel technique in the context of nested sequents, referred to as \emph{coloring}, which consists of taking a formula as input, guessing a certain coloring of its subformulae, and then running proof-search in a nested sequent calculus on the colored input. Our technique lets us decide the validity of standpoint formulae in $\conp$ since proof-search only produces a \emph{partial} proof relative to each permitted coloring of the input. We show how all partial proofs can be fused together to construct a complete proof when the input is valid, and how certain partial proofs can be transformed into a counter-model when the input is invalid. These ``certificates'' (i.e. proofs and counter-models) serve as explanations of the (in)validity of the input. 
\end{abstract}

%%%Introduction
\section{Introduction}\label{sec:introduction}

%%% INTRODUCTION

%The problem of knowledge integration, that is, of incorporating alternative knowledge sources into a single body, is a longstanding challenge in knowledge representation. 

%At the source of classic challenges in the area of knowledge integration lays the fact that knowledge sources encode the standpoints of their creators (e.g. in the form of contextual factors, viewpoints or semantic commitments). If we attempt to combine them into a single conflict-free conceptual model. 

\subsubsection{Standpoint Logic.} The fact that knowledge bases (KBs) encode the standpoints of their creators (e.g. in the form of viewpoints, contextual factors or semantic commitments) is the source of well-known challenges in the area of knowledge integration. Since semantic heterogeneity between the sources is to be expected, inconsistencies may arise if we attempt to combine them into a single conflict-free conceptual model. 
To illustrate this, consider three KBs: $\mathsf{C}$, a \textquote{common-sense} representation of colours; $\mathsf{H}$, a KB by a \emph{house painting} business that reuses and extends $\mathsf{C}$; and $\mathsf{R}$, a KB that formalizes the \emph{RYB color model}, from the fine arts tradition.% within the academic domain of colour theory. of \emph{colour theory} ($CT$), a discipline within 

% The RYB color model : Teal -> Blue, Green
% The commonsense colour collection, and H
 
 \begin{example}\label{example:first-colours}
According to $\mathsf{C}$, basic colours such as $\pred{Blue}$ and $\pred{Green}$ are disjoint. $\mathsf{H}$ complies with $\mathsf{C}$ and further specifies that $\pred{Teal}$ is $\pred{Green}$.
In contrast, according to $\mathsf{R}$ it is unequivocal that $\pred{Teal}$ is both $\pred{Green}$ and $\pred{Blue}$. % $\pred{Teal}$ holds globally.
Generally, it is conceivable that $\pred{Blue}$ holds.
\end{example}

These sources cannot be merged without the undesired effect of inconsistency, and circumventing it requires either knowledge weakening or duplication % (e.g. in this case one could duplicate blue into $\pred{Blue}_\mathsf{C}$ and $\pred{Blue}_\mathsf{R}$) 
\cite{Pesquita2013ToAlignments}. Instead, one may wish to jointly reason with the KBs, treating them as alternative standpoints on a domain.

%It is easy to see that these knowledge sources cannot be merged without some knowledge weakening or duplication (e.g. $\pred{Teal}_1$ and $\pred{Teal}_2$), which are common techniques \cite{Pesquita2013ToAlignments} to circumvent the undesired effect of inconsistency. %Instead, one may use Standpoint logic to jointly reason with the knowledge sources, treating them as alternative standpoints on a domain.

\begin{figure}
\begin{center}
\begin{tikzpicture}

%%%Model 1
\node[%point
] (w1) [] {$\pred{Blue}$}; 

\node[%point
] (w2) [right of=w1, xshift=3em] {$\pred{Green}$};

\draw[thick] (w1) -- (w2) node [midway,above, yshift=-1pt] {$\perp$};

\node[draw, very thick, %dotted,
 color=black, rounded corners, inner xsep=5pt, inner ysep=6pt, fit=(w1) (w2)] (m1) {}; %C
 
%% footnote 
\draw node [below right, inner sep=2pt] at (m1.south west) {\footnotesize $\s{C}$};

%%%Model 3

\node[%point
] (w31) [below of=w1, yshift=-1.5em] {$\pred{Blue}$}; 

\node[%point
] (w32) [right of=w31, xshift=3em] {$\pred{Green}$};

%\node[%point
%] (w33) [below of=w31,yshift=0.5em] {$\pred{Turq}$};

\node[%point
] (w34) [below of=w32, yshift=0.5em] {$\pred{Teal}$};

\draw[thick] (w31) -- (w32) node [midway,above, yshift=-1pt] {$\perp$};

\draw[->,thick] (w34) -- (w32) node [] {};

%\draw[->,thick] (w33) -- (w31) node [] {};

\node[draw, very thick, %dotted,
 color=black, rounded corners, inner xsep=5pt, inner ysep=6pt, fit=(w31) (w32) %(w33) 
 (w34)] (m3) {}; %C
 
%% footnote 
\draw node [below right, inner sep=2pt] (l3) at (m3.south west) {\footnotesize $\s{H}$};

%% Relation between m1 and m3
\draw[thick] (m1) -- (m3) node [midway,right] {$\preceq$};

%%%Model 2
\node[%point
] (w5) [right of=w32, xshift=3em] {$\pred{Teal}$};

%\node[%point
%] (w6) [right of=w5, xshift=3em] {$\pred{Turq}$};

\node[%point
] (w8) [above of=w5,yshift=5pt] {$\pred{Green}$};

\node[%point
] (w7) [right of=w8, xshift=3em] {$\pred{Blue}$}; 

%\node[%point
%] (w9) [right of=w34, xshift=9em] {$\pred{Teal.}$};

\draw[->,color=black,thick] (w5) -- (w7) node [] {};

\draw[->,color=black,thick] (w5) -- (w8) node [] {};

%\draw[->,thick] (w6) -- (w7) node [] {};

%\draw[->,thick] (w6) -- (w8) node [] {};

\node[draw, very thick, %dotted,
 color=black, rounded corners, inner xsep=5pt, inner ysep=6pt, fit=(w5) %(w6) 
 (w7) (w8)] (m2) {}; %C
 
%% footnote 
\draw node [below right, inner sep=2pt] at (m2.south west) {\footnotesize $\s{R}$};

%\node[draw, very thick, %dotted,
% color=black, rounded corners, inner xsep=5pt, inner ysep=6pt, fit=(m1) (m2) %(w33) 
% (m3) (l3)] (mstar) {}; %C
 
%% footnote 
%\draw node [below right, inner sep=2pt] at (mstar.south west) {\footnotesize $\star$};

 \end{tikzpicture}
\end{center}
\vspace{-1em}
\caption{Diagrams of $\s{C}$, $\s{H}$ and $\s{R}$. $\preceq$ indicates that $\s{H}$ extends $\s{C}$. }
\label{fig:color-models}
\vspace{-1em}

\end{figure}

Standpoint logic \cite{gomez2021standpoint} is a simple multi-modal logic intended for the representation of knowledge relative to different, possibly conflicting, perspectives. The framework introduces the labeled modalities $\spbox{s}$ and $\spdia{s}$ for each standpoint $s$, %which are used to specify knowledge relative to a standpoint and to establish alignments between them. 
where $\spbox{s} \phi$ is read as ``according to $s$, it is \emph{unequivocal} that $\phi$'' and $\spdia{s} \phi$ as ``according to $s$, it is \emph{conceivable} that $\phi$''. In addition, $s\preceq s'$ indicates that the standpoint $s$ is \emph{sharper} than %(or subsumed by)
$s'$, that is, $s$ complies with $s'$ and further specifies it.

%While it draws from and connects to the philosophical theory of supervaluationism \cite{Alvarez2018DealingSemantics,Bennett11lomorevi,Fine1975,Keefe1997TheoriesVagueness}, \emph{standpoint logic} aims to be a minimalist framework with straightforward applications in the area of knowledge integration, where independent knowledge sources can be seen as standpoints over a shared or overlapping domain. 

%In contrast, SL 

%\vspace{-1em}
\begin{enumerate}[series=fexample,label={\rm(F\arabic*)},ref={\rm (F\arabic*)},leftmargin=2.2em,labelwidth=2em] 
    \item	$\standb{C}\neg(\pred{Blue}\!\con\!\pred{Green}) \con  \standb{R}(\pred{Teal}\! \rightarrow\! (\pred{Blue}\!\con\!\pred{Green})) $\label{formula:ex_CandH}
  %  \item	$\standb{R}(\pred{Teal})\rightarrow(\standb{C}\pred{Green} \con \standb{R}\pred{Green})$\label{formula:ex_bridge}
\end{enumerate}
\begin{enumerate*}[resume*=fexample,label={\rm(F\arabic*)},ref={\rm (F\arabic*)},labelwidth=2em,itemjoin={\quad}] 
	\item $(\s{H}\!\preceq\!\s{C})$\label{formula:ex_prec}
	\item $\standb{H}(\pred{Teal}\!\rightarrow\!\pred{Green})$\label{formula:ex_J}
	\item $\allstandd(\pred{Blue})$\label{formula:ex_universal}
\end{enumerate*}
\vspace{0.3em}
%%%% DESCRIPTION
%It is about sentences of the form ``It is unequivocal according to i that this is the case'' and ``it is concivable according to i that this is the case'', where i denotes a standpoint that can be adopted by an agent or a set of them in a particular context. 

The formulae \ref{formula:ex_CandH}-\ref{formula:ex_universal} formalize \Cref{example:first-colours}, (illustrated in \Cref{fig:color-models}) in propositional standpoint logic. \ref{formula:ex_CandH} encodes that $\pred{Blue}$ and $\pred{Green}$ are unequivocally disjoint according to standpoint $\mathsf{C}$, while according to standpoint $\mathsf{R}$ it is unequivocal that $\pred{Teal}$ implies both $\pred{Blue}$ and $\pred{Green}$. \ref{formula:ex_prec} encodes that $\s{H}$ includes the knowledge of $\s{C}$, and \ref{formula:ex_J} that $\pred{Teal}$ is $\pred{Green}$ according to $\s{H}$. Last, \ref{formula:ex_universal} encodes that $\pred{Blue}$ holds under some interpretations by using the universal standpoint $\star$, which sits atop any hierarchy of standpoints and is used to reference knowledge that is unequivocally true or conceivable among all perspectives.% e.g. We may add to our example  $\allstandb \pred{Blue} \rightarrow \pred{Colour}$. %The latter can also be used (with the unequivocality operator $\allstandb$) to specify background knowledge that must hold globally, for all standpoints. For instance, one could state $\allstandb \pred{Blue} \rightarrow \pred{Colour}$.

%In addition to knowledge relative to a standpoint, one can also refer to global truths by means of a \emph{universal standpoint} $\dst$, which sits atop any hierarchy of standpoints, and is used to reference what is unequivocally true or conceivable among all perspectives; e.g. We may add to our example $\standb{\dst} (\pred{Teal})$, meaning that $\pred{Teal}$ holds regardless of what standpoint we take. The universal standpoint is not only useful in establishing global truths, but also in expressing global indeterminacy (defined below left), and global determinacy (defined below right).

In addition to representing unequivocal and conceivable facts (e.g. \ref{formula:ex_J} and \ref{formula:ex_universal}), which may be relative to standpoints (e.g. \ref{formula:ex_CandH}), hold universally (e.g. \ref{formula:ex_universal}), or establish a hierarchy of standpoints (e.g. \ref{formula:ex_prec}), one may also express (in)determinate knowledge by means of the (definable) dual operators $\mathcal{I}_{s}$ and $\mathcal{D}_{s}$. The indeterminacy operator $\mathcal{I}_{s}\phi:= \spdia{s}\phi \con \spdia{s}\neg\phi\ $ makes explicit that both $\phi$ and $\neg\phi$ are conceivable in the context of $s$, thus making $\phi$ inherently indeterminate. Finally, the framework can be used to establish correspondences or bridges between the standpoints themselves. For instance, \ref{formula:ex_bridge} encodes that if something is $\pred{Teal}$ according to $\mathsf{R}$, then it is $\pred{Green}$ for $\mathsf{C}$ and $\mathsf{R}$. 

\begin{enumerate}[resume*=fexample,label={\rm(F\arabic*)},ref={\rm (F\arabic*)},leftmargin=2.2em,labelwidth=2em] 
  \item	$\standb{R}(\pred{Teal})\rightarrow(\standb{C}\pred{Green} \con \standb{R}\pred{Green})$\label{formula:ex_bridge}
\end{enumerate}

Natural reasoning tasks over multi-standpoint specifications include gathering unequivocal or undisputed knowledge, determining knowledge that is relative to a standpoint or a set of them, and contrasting the knowledge that can be inferred from different standpoints.
To illustrate, let us assume $\allstandb\pred{Teal}$ and examine some inferences that we can draw from this in the setting of \Cref{example:first-colours}. 
On the one hand, from \ref{formula:ex_bridge}, \ref{formula:ex_J}, and $\allstandb\pred{Teal}$ we obtain that green is unequivocal for the three standpoints: $\standb{C}\pred{Green}$, $\standb{H}\pred{Green}$ and $\standb{R}\pred{Green}$.
On the other hand, we can infer the global indeterminacy of blue $\mathcal{I}_{\star}\pred{Blue}$, because (i) $\pred{Teal}$ holds universally, (ii) it is unequivocal for $\s{R}$ that $\pred{Teal}$ implies $\pred{Blue}$ \ref{formula:ex_CandH}, hence $\allstandd\pred{Blue}$, and (iii) we know $\standb{C}\pred{Green}$, which together with \ref{formula:ex_CandH} implies $\standb{C}\neg\pred{Blue}$ and thus $\allstandd\neg\pred{Blue}$. 

Conveniently, the satisfiability problem in propositional standpoint logic is known to be $\np$-complete \cite{gomez2021standpoint}, in pleasant contrast to the $\pspace$-completeness normally exhibited by multi-modal epistemic logics, such as the closely related $\mathsf{KD45}_{n}$.\footnote{Standpoint logic introduces sharpenings and stronger interaction axioms than $\mathsf{KD45}_{n}$, as discussed in \cite{GomezAlvarezThesis}.} %This result, obtained via a translation to one-variable first-order logic, makes the framework attractive in applied scenarios.
 This result, obtained via a translation to one-variable first-order logic, makes the framework attractive in applied scenarios, and prompts our work to provide a suitable proof-theory for standpoint logic. Not only can our proof systems be leveraged to provide a proof-search procedure deciding the validity of standpoint formulae, but our proof-theoretic approach yields \emph{witnesses}, that is, proofs of valid formulae and counter-models of invalid formulae. Such ``certificates'' (i.e. proofs and counter-models) possess explanatory value, and may be used, for instance, to trace the standpoints involved in a certain inference; e.g. when a global indeterminacy such as $\mathcal{I}_{\star}\pred{Blue}$ is inferred from a large collection of standpoints, we may want to gather the standpoints that hold contrasting views (in this case $\s{R}$ and $\{\s{H},\s{C}\}$, which can be easily extracted from a proof).
Thus, our reliance on proof theory provides essential information that may be used to answer ``why'' a certain piece of information holds while still allowing ``low'' complexity reasoning.

\subsubsection{Nested Sequents and Proof Theory.} Since their inception, sequent systems---which consist of inference rules that syntactically manipulate pairs of multisets of formulae---have proven themselves fruitful in writing decision algorithms for logics~\cite{Dyc92,Gen35a,Gen35b,Sla97}. A crucial feature of such systems, and their use in decidability, is the so-called \emph{subformula property}, which a sequent system %is in possession of
has \iffi the premise(s) of each inference rule only contain subformulae of the conclusion of the rule. (NB. Systems with the subformula property are also referred to as \emph{analytic}.) %, that is, formulae are always preserved from the premises of an inference to the conclusion of the inference. 
 With the goal of securing this property for proof systems for theories \emph{beyond classical propositional logic} (e.g. the modal logics $\mathsf{Kt}$ and $\mathsf{S5}$), more sophisticated sequent systems extending Gentzen's original formalism were eventually proposed; e.g., see~\cite{Bel82,Sim94,Wan02}. In this paper, we employ one such extended formalism, viz. the \emph{nested sequent formalism}~\cite{Bru09,Bul92,Kas94,Pog09}, which utilizes trees of multisets of formulae in deriving theorems. Such systems have proven well-suited for automated reasoning with modal and related logics, being used (for instance) in the writing of decision/proof-search algorithms~\cite{Bru09,TiuIanGor12} and the extraction of interpolants~\cite{FitKuz15,LyoTiuGorClo20}.

Drawing on ideas from the \emph{structural refinement} methodology, detailed in~\cite{Lyo21thesis} and used to provide nested sequent systems for diverse classes of modal and constructive logics (see~\cite{LyoBer19,Lyo21a,Lyo21b}), our first contribution in this paper is the introduction of analytic nested sequent systems (each dubbed $\nspv$ with $\vocab$ a certain parameter) for propositional standpoint logics~\cite{gomez2021standpoint}. For our second contribution, we exploit our nested systems to write concrete, \emph{worst-case complexity-optimal} proof-search algorithms (deciding the validity of propositional standpoint formulae in $\conp$), which apply inference rules from $\nspv$ \emph{in reverse} on an input formula with the goal of building a proof thereof. Whereas typical proof-search algorithms operate deterministically and attempt to build a \emph{complete} proof of the input, we introduce a novel technique (our third contribution) referred to as \emph{coloring}, which performs proof-search \emph{non-deterministically} and which only constructs a \emph{partial} proof of the input relative to each non-deterministic choice. The technique of coloring involves first guessing a particular labeling of the subformulae of an input formula with \emph{active} $\act$ and \emph{inactive} $\inact$ labels, with the proof-search algorithm subsequently only processing data deemed active. An interesting consequence of this technique is %a reduction in the complexity of proof-search from $\exptime$ to $\conp$ 
 the attainment of a $\conp$ proof-search algorithm as the partial proofs constructed during proof-search are at most polynomially larger than the input and only require polynomial time to compute. Moreover, in the instance where the input formula is invalid, we show how to construct a counter-model from failed proof-search, and in the instance where our input formula is valid, we provide a procedure that generates a \emph{complete} proof witnessing the validity of the input formula by patching together all \emph{partial} proofs (our fourth contribution).

\subsubsection{Organization of Paper.} Our paper is organized as follows: \sect~\ref{sec:standpoint-logic} presents the syntax and semantics of propositional standpoint logic. In \sect~\ref{sec:nested-calculi}, we introduce our nested sequent systems for propositional standpoint logics, proving such systems sound and concluding their completeness. In the penultimate section (\sect~\ref{sec:proof-search}), we introduce the method of coloring and show how to automate reasoning with standpoint logics, that is, we provide a (worst-case complexity-optimal) proof-search algorithm deciding the validity of propositional standpoint formulae in $\conp$. The final section (\sect~\ref{sec:conclusion}) concludes the paper and discusses future work.

%%%Logical Preliminaries for Standpoint Logic
\section{Standpoint Logic}\label{sec:standpoint-logic}

Let us now specify the syntax of propositional standpoint logic (SL), denoted by $\PropStandpointLogic$.

\begin{definition}[Syntax of Standpoint Logic]\label{def:logical-languages} Let $\vocab = \langle \pset, \spset \rangle$ be a \emph{vocabulary} where $\pset$ is a non-empty set of propositional variables and $\spset$ is a set of standpoint symbols containing the distinguished symbol $\dst$, i.e. the \emph{universal standpoint}. We define the language $\langsharp := \{s \preceq s' \ | \ s,s' \in \spset\}$, and refer to formulae in $\langsharp$ as \emph{sharpening statements}. The language $\langv$ is defined via the following grammar in BNF:
$$
\phi ::=  p \ | \ \neg p \ |\ (\phi \lor \phi) \ | \ (\phi \land \phi) \ | \ \spbox{s} \phi \ | \ \spdia{s} \phi
$$
where $p \in \pset$ and $s \in \spset$. We also use $\top$ and $\bot$ as shorthands with the usual definitions.

Last, for $\Gamma \subseteq \langsharp$ and $\phi \in \langv$, we define a \emph{standpoint implication} to be a formula of the form $\bigwedge \Gamma \rightarrow \phi$, where $\bigwedge \Gamma$ is a conjunction of all elements of $\Gamma$, which equals $\top$ when $\Gamma$ is empty.
%in the usual way.
\end{definition}

 We make use of formulae in negation normal form as this will simplify the structures present in our nested systems and enhance the readability of our proof theory. To further simplify, we also assume w.l.o.g. that sets of sharpening statements are (1) free of cycles $s_{1} \preceq s_{2}, \ldots, s_{n} \preceq s_{1}$ and (2) omit occurrences of $\star$. Assumption (1) is permitted since any standpoint implication containing a cycle $s_{1} \preceq s_{2}, \ldots, s_{n} \preceq s_{1}$ of standpoints is equivalent to one where the cycle is deleted and all occurrences of $s_{1}, \ldots, s_{n-1}$ are replaced by $s_{n}$ in the formula. Regarding assumption (2), any sharpening statement with $\star$ is either of the form $s\preceq\star$, and is thus trivial (see \dfn~\ref{def:semantic-clauses} below), or is of the form $\star\preceq s$, in which case $s$ can be systematically replaced by $\star$ in a standpoint implication to obtain an equivalent one.

\begin{definition}[Subformula and Size]\label{DEF:subformulas}
 We define the set of \emph{subformulae} of $\phi$, denoted $\sufo{\phi}$, recursively as follows:
\begin{itemize}

\item $\sufo{p} := \{p\}$ and $\sufo{\neg p} :=  \{\neg p\}$;

\item $\sufo{\heartsuit \psi} := \{\heartsuit \psi\} \cup \sufo{\psi}$;

\item $\sufo{\psi \otimes \chi} := \{\psi \otimes \chi\} \cup \sufo{\psi} \cup \sufo{\chi}$.

\end{itemize}
with $\heartsuit \in \{\spdia{s}, \spbox{s} \ | \ s \in \spset\}$ and $\otimes \in \{\lor, \land\}$. We say that $\psi$ is a \emph{subformula} of $\phi$ \iffi $\psi \in \sufo{\phi}$, and define the \emph{size} of a formula $\phi$ in $\langv$, denoted $|\phi|$, to be equal to $|\sufo{\phi}|$, i.e. to the number of its subformulae.
%    $\psi$ is a subformula of $\phi$ if either $\psi=\phi$ or $\phi$ is of the form $\n\phi'$, $\phi' \con \phi''$, $\phi' \dis \phi''$, $\spbox{s}\phi'$  or $\spdia{s}\phi'$ and $\psi$ is a subformula of either $\phi'$ or $\phi''$. Let $\sufo{\phi}$ be the set of all subformulas of $\phi$. Then $|\sufo{\phi}|\leq|\phi|$.
\end{definition}

In what follows, we introduce the semantics of SL, defined over a structure of precisifications, which is akin to the usual structure of possible worlds. A \emph{precisification} is a complete and consistent way in which the state of affairs can be described with a given vocabulary, and standpoints are modeled as sets of precisifications considered admissible. This strategy of modelling the variability of natural language as hyper-ambiguity is based on the theory supervaluationism \cite{Fine1975,Keefe1997TheoriesVagueness}, which standpoint logic draws from \cite{Alvarez2018DealingSemantics,Alvarez2017TalkingTerms}.

\begin{definition}[Standpoint Model]\label{def:standpoint-models} Given a vocabulary $\vocab$, a model $\model$ (over $\vocab$) is a triple $\langle \Pi, \sigma, \delta \rangle$, where $\Pi$ is a non-empty set of precisifications, $\sigma : \spset \to 2^{\Pi}$, and $\delta : \pset \to 2^{\Pi}$ with $\sigma(s) \neq \emptyset$ for all $s \in \spset$ and 
$\sigma(\dst) = \Pi$. The set of all such models is denoted by $\modelclass$. 
\end{definition}

\begin{definition}[Semantic Clauses]\label{def:semantic-clauses} Let $\Gamma \subseteq \langsharp$ and $\phi, \psi \in \langv$. Moreover, let $\model = \langle \Pi, \sigma, \delta \rangle$ be a standpoint model with $\pi \in \Pi$. We recursively define the satisfaction of a formula on $\model$ at $\pi$ accordingly:
\begin{itemize}

\item $\model, \pi \models p$ \ifandonlyif  $\pi \in \delta(p)$;

\item $\model, \pi \models \neg p$ \ifandonlyif $\pi \not\in \delta(p)$;

%\item $\model, \pi \models \neg \phi$ \ifandonlyif $\model, \pi \not\models \phi$;
\item $\model, \pi \models \phi \land \psi$ \ifandonlyif $\model, \pi \models \phi$ and $\model, \pi \models \psi$;

\item $\model, \pi \models \phi \lor \psi$ \ifandonlyif $\model, \pi \models \phi$ or $\model, \pi \models \psi$;

\item $\model, \pi \models \spdia{s} \phi$ \ifandonlyif for some $\pi' \in \sigma(s)$, $\model, \pi' \models \phi$;

\item $\model, \pi \models \spbox{s} \phi$ \ifandonlyif for all $\pi' \in \sigma(s)$, $\model, \pi' \models \phi$;

\item $\model, \pi \models s \sharper s'$ \ifandonlyif $\sigma(s) \subseteq \sigma(s')$;

%\item $\model, \pi \models s \sharper s' \land \bigwedge \Gamma'$ \ifandonlyif $\model, \pi \models s \sharper s'$ and $\model, \pi \models \bigwedge \Gamma'$;

\item $\model, \pi \models \bigwedge \Gamma$ \ifandonlyif $\model, \pi \models s \sharper s'$ for all $s \sharper s' \in \Gamma$;

\item $\model, \pi \models \bigwedge \Gamma \rightarrow \phi$ \ifandonlyif  $\model, \pi \models \bigwedge \Gamma$ implies $\model, \pi \models \phi$;

\item $\model \models \bigwedge \Gamma \rightarrow \phi$ \ifandonlyif $\model, \pi \models \bigwedge \Gamma \rightarrow \phi$ for all $\pi \in \Pi$.

%\item $\models \bigwedge \Gamma \rightarrow \phi$ \ifandonlyif $\model \models \bigwedge \Gamma \rightarrow \phi$ for all $\model \in \modelclass$;

\end{itemize}

A standpoint implication $\bigwedge \Gamma \rightarrow \phi$ is defined to be \emph{valid} (relative to a vocabulary $\vocab$) \iffi it is true on each model $\model \in \modelclass$; it is defined to be \emph{invalid} (relative to $\vocab$)  otherwise. 

For a vocabulary $\vocab$, the \emph{standpoint logic} $\spv$ is the set of all valid standpoint implications $\bigwedge \Gamma \rightarrow \phi$ over $\modelclass$.
\end{definition}

It is worth remarking that the specification of sharpening statements in a separate language (viz. $\langsharp$) and the above definition of satisfiability and validity contrast with the original presentation in \cite{gomez2021standpoint}. However, this specification simplifies our treatment of sharpening statements, which previously served as atomic propositions in the language $\langv$. In fact, these statements are obsolete in extensions of the language allowing set theoretical combinations of standpoints in modalities (which is the object of current research). %, because $s\preceq s'$ can be expressed with the assertion $\spbox{s\backslash s'}\bot$. 
 Moreover, in these extensions, the natural requirement of inner consistency (i.e. the non-emptiness of $\sigma(s)$, for each $s \in \spset$) of standpoints is relaxed, which can be easily reflected in our nested sequent systems by dropping the $\nr{s}$ rule (see \fig~\ref{fig:calculus} in \Cref{sec:nested-calculi}).

%Notice that since standpoints do not model the mental state of agents, the usual set of relations is simplified into a function mapping each standpoint symbol to a set.Moreover, standpoints are non-empty: Intuitively a standpoint is a set of consistent semantic commitments that can be made fully precise in different ways, hence a standpoint that cannot be \emph{precisified} is considered senseless. This condition, however, can be easily relaxed if desired, for instance if set theoretic combinations of standpoints are allowed in the language.

%%%Nested Calculus and Soundness
\section{Nested Sequent Systems}\label{sec:nested-calculi}

We define a \emph{nested sequent} (which we will also refer to as a \emph{sequent}) to be a formula of the form $\sprset \sep \nseq$ with $\sprset$ and $\nseq$ defined via the following grammars in BNF:
$$
\sprset ::= s \preceq s' \ | \ \emptyset \ | \ \sprset, \sprset \quad \nseq ::= \Sigma \ | \ \nseq, (s)[\Sigma]_{\pi}
$$
$$
\Sigma ::= \phi \ | \ \emptyset \ | \ \Sigma, \Sigma
$$
where $s, s' \in \spset \setminus \{\dst\}$, $\phi \in \langv$, and $\pi$ is among a countably infinite set of labels $\{\pi_{i} \ | \ i \in \mathbb{N} \setminus \{0\}\}$. We use $\ns$ and $\nsii$ (occasionally annotated) to denote nested sequents and note that we employ the use of labels as this proves useful in extracting a counter-model from failed proof-search (see \thm~\ref{thm:correctness}). Moreover, each nested sequent $\sprset \sep \nseq$ with $\nseq = \Sigma_{0}, (s_{1})[\Sigma_{1}]_{\pi_{1}}, \ldots, (s_{n})[\Sigma_{n}]_{\pi_{n}}$ possesses a special structure; namely, the \emph{antecedent} $\sprset$ is a set of sharpening statements of the form $s \preceq s'$, %\emph{which forms a directed acyclic graph (DAG)}, 
and the \emph{consequent} $\nseq$ is a multiset  %that is a formulaic representation of 
 encoding a tree of depth 1 whose nodes are multisets of formulae from $\langv$. The consequent $\nseq$ can be expressed graphically as follows:
\begin{center}
%\resizebox{\columnwidth}{!}{
\begin{tabular}{c c c}
\xymatrix@C=1em{
		&  & & \Sigma_{0}\ar@{->}[dlll]|-{s_{1}}\ar@{->}[dl]|-{s_{2}}\ar@{->}[dr]|-{s_{n-1}}\ar@{->}[drrr]|-{s_{n}} & & &   		\\
	\Sigma_{1} & & \Sigma_{2}	& \hdots  & \Sigma_{n-1} & & \Sigma_{n} 
}
\end{tabular}
%}
\end{center}
We refer to a multiset $\Sigma_{i}$ occurring in the consequent of a nested sequent as a \emph{component}, and note that components (along with the antecedent and consequent) are permitted to be empty $\emptyset$. Intuitively, components correspond to precisifications in a standpoint model. It is also worthwhile to define the relation $\sprel$ on standpoints as this will be used as a side condition dictating applications of certain inference rules:

\begin{definition}\label{def:ref-tra-sharpening-closure} For a nested sequent $\sprset \sar \nseq$, let $\sprel \mathop{\subseteq} \spset \times \spset$ be the minimal reflexive and transitive relation such that
\begin{itemize}
    \item $s \sprel s'$ for every $s \preceq s' \in \Gamma$, and
    \item $s \sprel *$ for every $s \in \mathcal{S}$.
\end{itemize}
\end{definition}

% \begin{definition}[$\sprel$]\label{def:ref-tra-sharpening-closure} For a nested sequent $\sprset \sar \nseq$, let $\sprel \mathop{\supseteq} \preceq$ be the minimal reflexive and transitive relation such that $s \sprel \star$ for every $s \in \mathcal{S}$.
% \end{definition}

% \begin{definition}[$\sprel$]\label{def:ref-tra-sharpening-closure} For a nested sequent $\sprset \sar \nseq$, let $\sprel :  \spset \to \spset$ be the minimal reflexive and transitive relation such that $s \sprel \star$ for every $s \in \mathcal{S}$ and \sprel includes \preceq.
% \end{definition}

A nice feature of nested sequents is that such objects typically permit a formula translation,  e.g.~\cite{Bru09,Bul92,Kas94,Pog09}, meaning that our logical semantics can be lifted to the language of our proof systems without introducing an extended semantics for nested sequents.

\begin{definition}[Formula Interpretation]\label{def:formula-interpretation} We define the \emph{formula interpretation} of a nested sequent $\sprset \sep \nseq$ with $\nseq = \Sigma_{0}, (s_{1})[\Sigma_{1}]_{\pi_{1}}, \ldots, (s_{n})[\Sigma_{n}]_{\pi_{n}}$ as follows:
$$
\iota(\sprset \sep \nseq) := \bigwedge \sprset \rightarrow \bigvee \Sigma_{0} \lor \bigvee_{1 \leq i \leq n} \spbox{s_{i}}(\bigvee \Sigma_{i}) 
$$
We define $\sprset \sep \nseq$ to be valid \ifandonlyif $\iota(\sprset \sep \nseq)$ is valid. Also, we note that $\bigwedge \emptyset = \top$ and $\bigvee \emptyset = \bot$, as usual.
\end{definition}

A uniform presentation of our nested calculi is given in \fig~\ref{fig:calculus}. We let $\nspv$ denote the corresponding nested sequent calculus over a vocabulary $\vocab$. Our inference rules make use of the brackets `$\{$' and `$\}$' in the consequent of a nested sequent to indicate that the displayed formula(e) occur in some component. In particular, given a nested sequent $\sprset \sep \nseq$, where $\nseq$ is of the form $\Sigma_{0}, (s_{1})[\Sigma_{1}]_{\pi_{1}}, \ldots, (s_{i})[\Sigma_{i}]_{\pi_{i}}, \ldots, (s_{n})[\Sigma_{n}]_{\pi_{n}}$, the notation $\sprset \sep \nseq\{\phi\}_{\pi_{i}}$ indicates that $\phi$ occurs in $\Sigma_{i}$; additionally, we use $\sprset \sep \nseq\{\phi\}_{\pi_{0}}$ to indicate that $\phi$ occurs in $\Sigma_{0}$, i.e. the label $\pi_{0}$ is used to reference the multiset $\Sigma_{0}$ serving as the root of the tree encoded by the consequent.

\begin{figure*}[t]

\begin{center}
\begin{tabular}{c c c}

\AxiomC{}
\RightLabel{$\id$}
\UnaryInfC{$\sprset \sep \nseq\{p, \neg p\}_{\pi}$}
\DisplayProof

&

\AxiomC{$\sprset \sep \nseq\{\phi, \psi\}_{\pi}$}
\RightLabel{$\disr$}
\UnaryInfC{$\sprset \sep \nseq\{\phi \lor \psi\}_{\pi}$}
\DisplayProof

&

\AxiomC{$\sprset \sep \nseq\{\phi\}_{\pi}$}
\AxiomC{$\sprset \sep \nseq\{\psi\}_{\pi}$}
\RightLabel{$\conr$}
\BinaryInfC{$\sprset \sep \nseq\{\phi \land \psi\}_{\pi}$}
\DisplayProof
\end{tabular}
\end{center}

\begin{center}
\begin{tabular}{c c}
\AxiomC{$\sprset \sep \nseq\{\spbox{s} \phi\}_{\pi}, (s)[\phi]_{\pi'}$}
\RightLabel{$\spboxr{s}^{\dag_{1}}$}
\UnaryInfC{$\sprset \sep \nseq\{\spbox{s} \phi\}_{\pi}$}
\DisplayProof

&

\AxiomC{$\sprset \sep \nseq, (s)[\emptyset]_{\pi'}$}
\RightLabel{$\nr{s}^{\dag_{1}}$}
\UnaryInfC{$\sprset \sep \nseq$}
\DisplayProof
\end{tabular}
\end{center}

\begin{center}
\begin{tabular}{c c c}
\AxiomC{$\sprset \sep \nseq\{\spdia{s} \phi\}_{\pi}, (s')[\Sigma, \phi]_{\pi'}$}
\RightLabel{$\spdiari{s}^{\dag_{2}}$}
\UnaryInfC{$\sprset \sep \nseq\{\spdia{s} \phi\}_{\pi}, (s')[\Sigma]_{\pi'}$}
\DisplayProof

&

\AxiomC{$\sprset \sep \nseq, (s')[\spdia{s} \phi, \phi, \Sigma]_{\pi}$}
\RightLabel{$\spdiarii{s}^{\dag_{2}}$}
\UnaryInfC{$\sprset \sep \nseq, (s')[\spdia{s} \phi, \Sigma]_{\pi}$}
\DisplayProof

&

\AxiomC{$\sprset \sep \phi, \nseq\{\spdia{*} \phi\}_{\pi}$}
\RightLabel{$\spdiar{*}$}
\UnaryInfC{$\sprset \sep \nseq\{\spdia{*} \phi\}_{\pi}$}
\DisplayProof
\end{tabular}
\end{center}

\caption{The nested calculus $\nspv$ with $\vocab = \langle \pset, \spset \rangle$ a vocabulary. We note that $\pi$ is permitted to be any label from $\{\pi_{i} \ | \ i \in \mathbb{N} \setminus \{0\}\}$ and that $\nspv$ contains a copy of $\spboxr{s}$, $\nr{s}$, $\spdiari{s}$, and $\spdiarii{s}$ for each $s \in \spset$. The side condition $\dag_{1}$ stipulates that the rule is applicable only if the label $\pi'$ is fresh and $\dag_{2}$ stipulates that the rule is applicable only if $s' \preceq^{*}_{\sprset} s$.}\label{fig:calculus}

\end{figure*}

To make the functionality of each rule in $\nspv$ precise, we explicitly state the operation performed by each rule. With the exception of the premise-free $\id$ rule, we explain for each rule how the \emph{premise(s)} (the nested sequent(s) occurring above the horizontal inference line) are obtained from the \emph{conclusion} (the nested sequent occurring below the horizontal inference line). This explanation is consistent with how the rules are applied (bottom-up) during proof-search as described in the following section. Also, in accordance with standard proof-theoretic terminology~\cite{Bus98,Tak13}, we refer to the formula that is explicitly displayed in the conclusion of a rule as \emph{principal}, and indicate the principal formulae in our explanation of the rules below to make this precise for the reader.
\begin{description}

\item[$\id$] A nested sequent is \emph{initial}, and may be used to begin a derivation, so long as some component %of the consequent 
 contains both $p$ and $\neg p$ (the principal formulae);

\item[$\disr$] If a component $\Sigma_{i}$ of the conclusion contains $\phi \lor \psi$ (the principal formula), then adding $\phi$ and $\psi$ to $\Sigma_{i}$ yields the premise;

\item[$\conr$] If a component $\Sigma_{i}$ of the conclusion contains $\phi \land \psi$ (the principal formula), then adding $\phi$ to $\Sigma_{i}$ yields the left premise and adding $\psi$ to $\Sigma_{i}$ yields the right premise;

\item[$\spboxr{s}$] For any $s \in \spset$, if a component $\Sigma_{i}$ of the conclusion contains $\spbox{s} \phi$ (the principal formula), then appending the consequent $\nseq$ with $(s)[\phi]_{\pi'}$, where $\pi'$ is fresh (i.e. it does not occur in the conclusion), yields the premise;

\item[$\nr{s}$] For any $s \in \spset$, we may append the consequent of the conclusion with $(s)[\emptyset]_{\pi'}$ to obtain the premise so long as $\pi'$ is fresh;

\item[$\spdiari{s}$] For any $s \in \spset$, if a component $\Sigma_{i}$ of the conclusion contains $\spdia{s} \phi$ (the principal formula), the consequent contains a nesting $(s')[\Sigma]_{\pi'}$, and $s' \preceq^{*}_{\sprset} s$, then adding $\phi$ to the nesting $(s')[\Sigma]_{\pi'}$ yields the premise;

\item[$\spdiarii{s}$] For any $s \in \spset$, if the consequent contains a nesting of the form $(s')[\spdia{s} \phi, \Sigma]_{\pi'}$ with $\spdia{s} \phi$ the principal formula, and $s' \preceq^{*}_{\sprset} s$, then adding $\phi$ to the nesting $(s')[\spdia{s} \phi, \Sigma]_{\pi'}$ yields the premise;

\item[$\spdiar{\dst}$] If a component $\Sigma_{i}$ of the conclusion contains $\spdia{\dst} \phi$ (the principal formula), then prepending the consequent $\nseq$ with $\phi$ (i.e. adding $\phi$ to $\Sigma_{0}$) yields the premise.
\end{description}

\vspace{0.3em}

\begin{example} Below, we provide an example of a nested sequent derivation. To minimize the width of the proof, we let $\phi$ denote $\spdia{s} \spdia{\dst} \neg p \lor \spbox{s'} p$.
\begin{center}
\AxiomC{}
\RightLabel{$\id$}
\UnaryInfC{$s' \preceq s \sar \phi, \spdia{s} \spdia{\dst} \neg p, \spbox{s'} p, (s')[\spdia{\dst} \neg p, \neg p, p]$}
\RightLabel{$\spdiarii{\dst}$}
\UnaryInfC{$s' \preceq s \sar \phi,  \spdia{s} \spdia{\dst} \neg p, \spbox{s'} p, (s')[\spdia{\dst} \neg p, p]$}
\RightLabel{$\spdiari{s}$}
\UnaryInfC{$s' \preceq s \sar \phi,  \spdia{s} \spdia{\dst} \neg p, \spbox{s'} p, (s')[p]$}
\RightLabel{$\spboxr{s'}$}
\UnaryInfC{$s' \preceq s \sar \phi,  \spdia{s} \spdia{\dst} \neg p, \spbox{s'} p$}
\RightLabel{$\disr$}
\UnaryInfC{$s' \preceq s \sar \phi$}
\DisplayProof
\end{center}
Observe that $\spdiari{s}$ is applicable as $s' \sprel s$ holds due to the antecedent, and $\spdiarii{\dst}$ is applicable as $s' \sprel \dst$ holds by definition (see \dfn~\ref{def:ref-tra-sharpening-closure}).
\end{example}

%We now prove that our calculi are sound (\Cref{thm:soundness}), that is, that every nested sequent derivable in $\nspv$ is valid. Our soundness result follows from the following lemma (\Cref{lem:sprel-implies-sharpening}), whose proof is straightforward and is therefore deferred to the appendix for the interested reader. After showing soundness, we state our completeness theorem, which is a consequence of the work in \Cref{sec:proof-search}.

We now prove that our calculi are sound (\thm~\ref{thm:soundness}, building on \lem~\ref{lem:sprel-implies-sharpening}), that is, that every nested sequent derivable in $\nspv$ is valid. We then state our completeness theorem, which is a consequence of the work in \sect~\ref{sec:proof-search}.

\begin{lemma}\label{lem:sprel-implies-sharpening}
Let $\sprset \vdash \nseq$ be a sequent, $\model = \langle \Pi, \sigma, \delta \rangle$ be a model with $\pi \in \Pi$, and $s, s' \in \spset$. If $\model, \pi \models \bigwedge \sprset$ and $s' \sprel s$, then $\model, \pi \models s' \preceq s$.
\end{lemma}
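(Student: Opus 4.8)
The plan is to prove the statement by a straightforward induction on the inductive definition of $\sprel$ from \dfn~\ref{def:ref-tra-sharpening-closure}, after first unfolding what a sharpening statement means semantically. Following the intended reading of $\preceq$ (a sharper standpoint having fewer precisifications), I take $\model, \pi \models t \preceq t'$ to hold exactly when $\sigma(t) \subseteq \sigma(t')$; note this condition does not depend on $\pi$. Under this reading the goal $\model, \pi \models s' \preceq s$ reduces to the set inclusion $\sigma(s') \subseteq \sigma(s)$, while the hypothesis $\model, \pi \models \bigwedge \sprset$ says precisely that $\sigma(t) \subseteq \sigma(t')$ for every generator $t \preceq t' \in \sprset$.

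First I would regard $\sprel$ as the least relation on $\spset$ closed under (i) reflexivity, (ii) transitivity, (iii) $t \sprel t'$ whenever $t \preceq t' \in \sprset$, and (iv) $t \sprel \dst$ for all $t \in \spset$, and then argue by induction on the derivation of $s' \sprel s$ that $\sigma(s') \subseteq \sigma(s)$. The base cases are immediate: for reflexivity, $\sigma(s') \subseteq \sigma(s')$ holds trivially; for a generator $t \preceq t' \in \sprset$, the inclusion $\sigma(t) \subseteq \sigma(t')$ is handed to us directly by the hypothesis $\model, \pi \models \bigwedge \sprset$; and for the clause $t \sprel \dst$, we use $\sigma(\dst) = \Pi$ together with $\sigma(t) \subseteq \Pi$, which holds since $\sigma$ maps into $2^{\Pi}$.

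For the inductive (transitivity) step, suppose $s' \sprel r$ and $r \sprel s$ were combined to conclude $s' \sprel s$. The induction hypothesis supplies $\sigma(s') \subseteq \sigma(r)$ and $\sigma(r) \subseteq \sigma(s)$, and composing these inclusions yields $\sigma(s') \subseteq \sigma(s)$, which is exactly $\model, \pi \models s' \preceq s$. This completes the induction and hence the proof.

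I do not anticipate a genuine obstacle here: the argument is a routine reflexive--transitive-closure induction in which the universal-standpoint and transitivity clauses are precisely what make $\sprel$ respect the subset order on precisification sets. The only point requiring care is that the semantic clause for sharpening statements is not among those listed explicitly in the excerpt (it is flagged in the surrounding TODO), so the proof does hinge on fixing the interpretation $\model, \pi \models t \preceq t' \iff \sigma(t) \subseteq \sigma(t')$; once this convention is in place, everything else is bookkeeping.
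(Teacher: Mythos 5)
Your proof is correct and follows essentially the same route as the paper's: both interpret $s' \preceq s$ semantically as the inclusion $\sigma(s') \subseteq \sigma(s)$ and then analyze how $s' \sprel s$ arises from \dfn~\ref{def:ref-tra-sharpening-closure}, the paper via a flat four-case split (with $s = \dst$, reflexivity, a generator in $\sprset$, or a chain of generators) and you via an equivalent structural induction on the closure derivation. Your explicit induction handles the transitivity step slightly more rigorously than the paper's appeal to ``a path in $\spset$,'' and your remark that the satisfaction clause for sharpening statements must be fixed by convention is a fair observation about the draft, but these are presentational refinements rather than a different argument.
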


\begin{proof} Assume that $\model, \pi \models \bigwedge \sprset$ and $s' \sprel s$ for some $s', s \in \spset$. There are four cases to consider: 
\begin{enumerate}[label=(\arabic*),leftmargin=0.4 em,labelwidth=-1.2em]
    \item $s = \star$. The result is immediate as $\sigma(\dst) = \Pi$, and therefore, $\model, \pi \models s' \preceq \dst$ for every $s' \in \spset$ by \dfn~\ref{def:semantic-clauses}.
    \item $s' \preceq s \in \sprset$. From the assumption that $\model, \pi \models \bigwedge \sprset$ it follows that $\model, \pi \models s' \preceq s$.
    \item $s = s'$. Then, it is trivially implied that $\model, \pi \models s' \preceq s$ since $\sigma(s') = \sigma(s) \subseteq \sigma(s)$ by \dfn~\ref{def:semantic-clauses}.
    \item There are some $s_1, \ldots, s_n \in \spset$ such that $s' \preceq s_1,   \in \sprset$, $s_i \preceq s_{i+1}  \in \sprset$ for every $1 \leq i \leq n-1$, and $s_n \preceq s \in \sprset$, that is, $s' \sprel s$ is obtained by transitivity on a path in $\spset$. From this, together with the assumption that $\model, \pi \models \bigwedge \sprset$, it directly follows that $\model, \pi \models s' \preceq s$ by \dfn~\ref{def:semantic-clauses}. \qedhere
\end{enumerate}
\end{proof}

\begin{theorem}[Soundness]\label{thm:soundness}
If $\sprset \sep \nseq$ is derivable in $\nspv$, then $\sprset \sep \nseq$ is valid.
\end{theorem}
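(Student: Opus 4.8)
The plan is to argue by induction on the height of the derivation of $\sprset \sep \nseq$ in $\nspv$, showing that each rule preserves validity. Since \dfn~\ref{def:formula-interpretation} reduces validity of a sequent to validity of the single formula $\iota(\sprset \sep \nseq)$, it suffices to establish \emph{local soundness}: for each rule, each model $\model = \langle \Pi, \sigma, \delta \rangle$, and each precisification $\pi \in \Pi$, if $\model, \pi \not\models \iota(\text{conclusion})$ then $\model, \pi \not\models \iota(\text{premise})$ for at least one premise. Combined with the induction hypothesis (all premises valid), this gives validity of the conclusion. Note that $\iota$ discards the sequent labels, so the freshness condition $\dag_1$ on $\pi'$ plays no role here (only in counter-model extraction). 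The single observation that drives everything is \emph{point-independence}: for any standpoint expression $e$ and formula $\psi$, the truth of $\spbox{e}\psi$ and of $\spdia{e}\psi$ at $(\model,\pi)$ depends, by \dfn~\ref{def:semantic-clauses}, only on the fixed set $\sigma(e)$ and not on $\pi$; in particular a modal formula false at some precisification is false at \emph{every} precisification, including the evaluation point.

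For the base case, $\id$ yields valid sequents: its flagged component contains both $p$ and $\neg p$, so $\bigvee \Sigma_i$ subsumes the tautology $p \lor \neg p$ and is true at every precisification; hence the consequent of $\iota$ holds everywhere (directly if $i = 0$, and via $\spbox{s_i}(\bigvee \Sigma_i)$ if $i \geq 1$), so $\iota$ is valid. The propositional rules are routine: for $\disr$ the interpretations of premise and conclusion are classically equivalent (replacing the disjunct $\phi \lor \psi$ by the two disjuncts $\phi, \psi$), so validity transfers verbatim; for $\conr$, if the conclusion is falsified at $(\model, \pi)$ then the principal $\phi \land \psi$ is a false disjunct of some $\bigvee \Sigma_i$ at a suitable precisification, so one of $\phi, \psi$ fails there, falsifying the corresponding premise.

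The modal rules are where the real work lies, and all of them exploit point-independence; I would argue contrapositively at a fixed $(\model, \pi)$ with $\model, \pi \models \bigwedge \sprset$ and the consequent false. For $\spboxr{s}$, the principal $\spbox{s}\phi$ occurs inside the false consequent, so by point-independence it is false at $\pi$ itself; the appended root-child $(s)[\phi]_{\pi'}$ then contributes the disjunct $\spbox{s}\phi$, again false at $\pi$, falsifying the premise. The diamond rules additionally invoke \lem~\ref{lem:sprel-implies-sharpening}: the side condition $s' \sprel s$ yields $\model, \pi \models s' \preceq s$, i.e.\ $\sigma(s') \subseteq \sigma(s)$. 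Since $\spdia{s}\phi$ is false (everywhere, by point-independence), $\phi$ fails throughout $\sigma(s) \supseteq \sigma(s')$; hence the precisification $\rho \in \sigma(s')$ witnessing the falsity of $\spbox{s'}(\bigvee \Sigma)$ (which exists since a false $\spbox{s'}$-disjunct forces $\sigma(s') \neq \emptyset$) also falsifies $\phi$, so augmenting that $s'$-component with $\phi$---via $\spdiari{s}$, or its same-component variant $\spdiarii{s}$---keeps the box false at $\pi$. For $\spdiar{\dst}$, falsity of $\spdia{\dst}\phi$ together with $\sigma(\dst) = \Pi$ forces $\phi$ false throughout $\Pi$, in particular at $\pi$, so prepending $\phi$ to $\Sigma_0$ preserves falsity of the consequent. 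Finally, $\nr{s}$ appends $(s)[\emptyset]_{\pi'}$, contributing $\spbox{s}\bot$, which is false at $\pi$ \emph{exactly when} $\sigma(s) \neq \emptyset$.

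The main obstacle, and the conceptual heart of the argument, is the treatment of $\spboxr{s}$ and the diamond rules when the principal modal formula lies in a \emph{non-root} component of the depth-$1$ tree while the rule creates or modifies a component hanging directly off the root: naively the principal formula is then evaluated at a successor precisification rather than at $\pi$, so the same countermodel would not obviously refute the premise. Point-independence is precisely what bridges this gap, collapsing each $\spbox{e}$/$\spdia{e}$ subformula to a statement about the fixed set $\sigma(e)$ and thereby licensing the flat interpretation $\iota$; I expect making this observation precise to be the crux, after which the rule-by-rule checks are mechanical. A secondary point to isolate carefully is the soundness of $\nr{s}$, which, as noted above, requires each $\sigma(s)$ to be non-empty; I would therefore flag that standpoint models must be serial (consistently with the $\mathsf{KD45}$-style reading of standpoints), and ensure this is recorded in \dfn~\ref{def:semantic-clauses}.
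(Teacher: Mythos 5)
Your proof is correct and takes essentially the same approach as the paper: induction on the derivation with a contrapositive, preservation-of-falsifiability argument for each rule, invoking \lem~\ref{lem:sprel-implies-sharpening} for the diamond rules, where your explicit \emph{point-independence} observation is precisely what the paper uses implicitly when it passes from $\model, \pi'' \not\models \phi$ to $\model, \pi \not\models \spbox{s}\phi$ in its $\spboxr{s}$ case. One worthwhile addition on your part: the paper dismisses $\nr{s}$ among the ``simple'' remaining cases, but you are right that its soundness hinges on $\sigma(s) \neq \emptyset$ for every $s \in \spset$, a seriality condition omitted from the paper's definition of standpoint model (though tacitly assumed, e.g., in the counter-model construction of \thm~\ref{thm:correctness}), so flagging that this must be recorded in the semantics is a correct and genuinely useful catch.
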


\begin{proof} %We prove the result by induction on the number of inferences in a given derivation, making a case distinction on the last rule applied, and assuming that $\nseq$ is of the form $\Sigma_{0}, (s_{1})[\Sigma_{1}], \ldots, (s_{n})[\Sigma_{n}]$. We only show the $\spdiari{s}$ case of the inductive step here, deferring the remaining cases to the appendix since they are either similar or simple.
 We prove the result by induction on the number of inferences in a given derivation, and assume that $\nseq$ is of the form $\Sigma_{0}, (s_{1})[\Sigma_{1}], \ldots, (s_{n})[\Sigma_{n}]$.

\textit{Base case.} In the base case, our derivation consists of a single application of the $\id$ rule. Hence, 
$$
\iota(\sprset \sep \nseq\{p,\neg p\}_{\pi_{i}}) := \bigwedge \sprset \rightarrow \bigvee \Sigma_{0} \lor \bigvee_{1\leq i \leq n} \spbox{s_{i}}(\bigvee \Sigma_{i}) 
$$
where $p, \neg p \in \Sigma_{i}$, for some $0\leq i \leq n$. Regardless, the consequent of the implication above will be satisfied in any model $\model$, implying that the above implication is valid.

\textit{Inductive step.} We make a case distinction based on the last rule applied, and show that if the conclusion of the rule is invalid, then at least one of the premises of the rule is invalid, that is to say, we show by contraposition that if the premise(s) is (are) valid, then the conclusion is valid. We only show the $\spboxr{s}$ and $\spdiari{s}$ cases as the remaining cases are simple or argued in a similar manner.

$\spboxr{s}$. We assume that $\Sigma_{1}$ is of the form $\spbox{s} \phi, \Sigma_{1}'$ with $\spbox{s} \phi$ principal; all remaining cases are similar. Furthermore, let us suppose that $\iota(\Gamma \vdash \Delta\{\spbox{s} \phi\}_{\pi_{1}}) :=$
$$\bigwedge \sprset \rightarrow \bigvee \Sigma_{0} \lor \spbox{s_{1}}(\spbox{s} \phi \lor \bigvee \Sigma_{1}')  \lor \bigvee_{2\leq i \leq n} \spbox{s_{i}}(\bigvee \Sigma_{i}) 
$$
is invalid. Then, $\model, \pi \not\models \spbox{s_{1}}(\spbox{s} \phi \lor \bigvee \Sigma_{1}')$ for some standpoint model $\model := \langle \Pi, \sigma, \delta \rangle$ with a precisification $\pi$. Hence, there exists a precisification $\pi' \in \sigma(s_{1})$ such that $\model, \pi' \not\models \spbox{s} \phi$, implying that there exists a precisification $\pi'' \in \sigma(s)$ such that $\model, \pi'' \not\models \phi$. It thus follows that $\model, \pi \not\models \spbox{s} \phi$, showing that the premise of $\spboxr{s}$ is invalid.

$\spdiari{s}$. Suppose that $\Sigma_{1}$ is of the form $\spdia{s} \phi, \Sigma_{1}'$ with $\spdia{s} \phi$ principal; all remaining cases are argued in a similar fashion. Assume that $s' \sprel s$ holds and that the following is invalid:
\begin{gather*}
  \iota(\sprset \sep \nseq\{\spdia{s} \phi\}_{\pi}, (s')[\Sigma]_{\pi'}) := \bigwedge  \sprset \rightarrow \bigvee \Sigma_{0} \lor \\
    \spbox{s_{1}}(\spdia{s} \phi \lor \bigvee \Sigma_{1}') \lor
  \Big(  \bigvee_{2\leq i \leq n} \!\!\!\spbox{s_{i}}\big(\bigvee \Sigma_{i}\big)\Big) \lor \spbox{s'}\big(\bigvee \Sigma\big)
\end{gather*}
Therefore, there is a standpoint model $\model := \langle \Pi, \sigma, \delta \rangle$ with $\pi\in\Pi$ such that $\model, \pi \not\models \spbox{s'}(\bigvee \Sigma)$ and $\model, \pi \not\models \spbox{s_{1}}(\spdia{s} \phi \lor \bigvee \Sigma_{1}')$, and such that $\sigma(s') \subseteq \sigma(s)$, by $s' \sprel s$ and \lem~\ref{lem:sprel-implies-sharpening}. 
This entails that there exists a precisification $\pi' \in \sigma(s')$ such that $\model, \pi' \not\models \bigvee \Sigma$, and that there exists a precisification $\pi_{1} \in \sigma(s_{1})$ such that $\model, \pi_{1} \not\models \spdia{s} \phi$. The latter further implies that for every precisification in $\sigma(s)$, and thus for $\pi'$ (since $\pi'\in\sigma(s') \subseteq \sigma(s)$), that $\model, \pi' \not\models \phi$. %Since $s' \sprel s$ holds, we know that $\sigma(s') \subseteq \sigma(s)$  and \dfn~\ref{def:semantic-clauses}, showing that $\model, \pi_{3} \not\models \phi$ for every precisification $\pi_{3} \in \sigma(s')$. 
%Moreover, by our assumption, we know that there exists a precisification $\pi_{4} \in \sigma(s')$ such that $\model, \pi_{4} \not\models \bigvee \Sigma$. It follows that $\model, \pi_{4} \not\models \phi$, 
 Thus, the premise has been shown invalid.
\end{proof}

\begin{theorem}[Completeness]
If $\sprset \sep \nseq$ is valid, then $\sprset \sep \nseq$ is provable in $\nspv$.
\end{theorem}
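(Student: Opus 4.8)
The plan is to prove completeness in its contrapositive form: if $\sprset \sep \nseq$ is not derivable in $\nspv$, then it is not valid, i.e.\ $\iota(\sprset \sep \nseq)$ admits a counter-model. The engine driving this is the proof-search procedure developed in \sect~\ref{sec:proof-search}: one applies the rules of \fig~\ref{fig:calculus} in reverse, exhaustively and fairly, to the root sequent $\sprset \sep \nseq$. Whenever the input is underivable, some branch cannot be closed by $\id$ and instead terminates in a \emph{saturated} sequent---one in which every principal formula has already been decomposed or realized---from which a falsifying model is read off. Completeness then follows immediately: assuming $\sprset \sep \nseq$ is valid, proof-search cannot return a counter-model (this would contradict validity), so by the dichotomy of \thm~\ref{thm:correctness} it must instead return a derivation, witnessing provability.

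First I would argue that proof-search terminates, so that saturated sequents genuinely exist on every open branch. The calculus has the subformula property, so $\conr$ and $\disr$ strictly decrease formula complexity, and crucially the consequents remain trees of depth $1$: the only structure-creating rules, $\spboxr{s}$ and $\nr{s}$, append fresh depth-$1$ components but never nest further. Since a sequent contains only finitely many boxed subformulae, each of which need be realized by $\spboxr{s}$ only once, and each diamond need be copied into each existing component only once via $\spdiari{s}$/$\spdiarii{s}$/$\spdiar{\dst}$, the number of components and their contents are bounded once redundant reapplications are blocked by loop-checking. This bounded bookkeeping yields a finite search space, hence saturation.

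Given a saturated, non-axiomatic sequent $\sprset \sep \nseq^{*}$ with $\nseq^{*} = \Sigma_0, (s_1)[\Sigma_1]_{\pi_1}, \ldots, (s_m)[\Sigma_m]_{\pi_m}$, I would build $\model = \langle \Pi, \sigma, \delta\rangle$ with $\Pi = \{\pi_0, \ldots, \pi_m\}$ (root labelled $\pi_0$), letting $\delta$ falsify every component, namely $\pi_j \in \delta(p)$ precisely when $p \notin \Sigma_j$, and setting $\sigma(s) = \{\pi_j : 1 \leq j \leq m,\ s_j \sprel s\}$ so the label structure is closed upward under the relation of \dfn~\ref{def:ref-tra-sharpening-closure} (forcing $\sigma(s') \subseteq \sigma(s)$ whenever $s' \sprel s$). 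The heart of the argument is a truth lemma: for every $j$ and every $\psi \in \Sigma_j$, $\model, \pi_j \not\models \psi$, proved by induction on $\psi$. The literal and propositional cases are routine (the $\neg p$ case uses that $\id$ is inapplicable, so $p \notin \Sigma_j$); the $\spbox{s}\psi$ case uses the $\spboxr{s}$-witness component carrying $\psi$; and the $\spdia{s}\psi$ case uses $\spdiari{s}$/$\spdiarii{s}$-saturation, which guarantees $\psi$ has been copied into every $s_k$-component with $s_k \sprel s$---exactly the worlds comprising $\sigma(s)$. Evaluating $\iota(\sprset \sep \nseq^{*})$ at $\pi_0$ then makes the root disjunction and each boxed disjunct false, while \lem~\ref{lem:sprel-implies-sharpening} ensures $\model, \pi_0 \models \bigwedge \sprset$; so the interpretation fails and the sequent is invalid.

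The main obstacle I anticipate is the universal standpoint $\dst$. Because $\spdia{\dst}\psi$ quantifies over all of $\Pi = \sigma(\dst)$, its truth value is \emph{global}, so falsifying it at one world forces $\psi$ to be false at \emph{every} world---including the fresh worlds introduced to refute boxes. Yet $\spdiar{\dst}$ only deposits $\psi$ into the root $\Sigma_0$. Hence the naive valuation above can inadvertently \emph{verify} a universal diamond at a newly created world (where the relevant atom is vacuously true), breaking the truth lemma. Reconciling this is precisely where the apparatus of \sect~\ref{sec:proof-search}---the coloring discipline and the counter-model construction underlying \thm~\ref{thm:correctness}---must do real work: the valuation on non-root worlds has to be chosen to respect every universally-conceivable constraint (equivalently, universal-diamond content must be propagated, or a suitably minimal valuation adopted) so that global formulae are uniformly falsified. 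I expect the bulk of the effort, and the only genuinely non-routine reasoning, to lie in getting this $\dst$-case of the truth lemma right.
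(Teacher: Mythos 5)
Your proposal follows essentially the same route as the paper: completeness there is obtained precisely as a corollary of the terminating (\thm~\ref{thm:termination}) and correct (\thm~\ref{thm:correctness}) proof-search procedure, and your counter-model construction from a saturated, non-axiomatic sequent---the choice of $\Pi$, the falsifying $\delta$, and the $\sprel$-upward-closed $\sigma$---coincides with the one in the paper's proof of \thm~\ref{thm:correctness}, modulo the coloring discipline (which the paper needs only for the $\conp$ bound, not for completeness itself).

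The one place you go off track is the anticipated ``main obstacle'': it is not an obstacle, because the rule $\spdiar{*}$ (depositing into $\Sigma_{0}$) is \emph{not} the only mechanism governing universal diamonds. The calculus $\nspv$ contains copies of $\spdiari{s}$ and $\spdiarii{s}$ for \emph{every} $s \in \spset$, including $s = \dst$, and their side condition $s' \sprel \dst$ holds for every $s'$ by \dfn~\ref{def:ref-tra-sharpening-closure}. Accordingly, the saturation condition $\scspdia$ of \dfn~\ref{def:saturation-conditions}, instantiated at $s = \dst$, forces the immediate subformula of an active $\spdia{\dst}$-formula into \emph{every} nesting---including components freshly created by $\spboxr{s}$ or $\nr{s}$, since saturation is a property of the final sequent and proof-search keeps firing until it holds---while $\scastdia$ covers the root $\Sigma_{0}$. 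Hence at a saturated sequent the universal-diamond content occupies all components $\Sigma_{0}, \ldots, \Sigma_{n}$, the naive valuation falsifies it at every world of $\Pi$, and the $\spdia{\dst}$-case of the truth lemma goes through with no extra apparatus; this is exactly how the paper's proof of \thm~\ref{thm:correctness} argues it. (A small related slip: your uniform clause $\sigma(s) = \{\pi_{j} : s_{j} \sprel s\}$ would exclude $\pi_{0}$ from $\sigma(\dst)$; the paper instead sets $\sigma(\dst) := \Pi$ outright, as the semantics demands.)
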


\begin{proof}
The theorem follows from the correct (\thm~\ref{thm:correctness}) and terminating (\thm~\ref{thm:termination}) proof-search procedure given in the subsequent section (\sect~\ref{sec:proof-search}).
\end{proof}

%%%Proof-search, automated counter-model extraction, finite/tree model property, complexity, and completeness
\section{Automating Standpoint Logic via Proof-Search}\label{sec:proof-search}

We now employ our nested calculi in an algorithm that decides the validity of formulae for propositional standpoint logics. In particular, we design a proof-search algorithm (see \alg~\ref{alg:Prove} below) which takes a vocabulary $\vocab$ as a parameter and bottom-up applies rules from $\nspv$ in attempt to construct a proof of a given input sequent $\sprset \sar \phi$. %\footnote{
 We may make the simplifying assumption that our proof-search algorithm only receives inputs of the form $\sprset \sar \phi$ as any nested sequent $\sprset \sar \nseq$ with $\nseq = \Sigma_{0}, (s_{1})[\Sigma_{1}], \myDots, (s_{n})[\Sigma_{n}]$ is valid \iffi $\iota(\sprset \sar \nseq)$ is valid \iffi $\sprset \sar \phi$ is valid, where $\phi = \bigvee \Sigma_{0} \lor \spbox{s_{1}}(\bigvee \Sigma_{1}) \lor \cdots \lor \spbox{s_{n}}(\bigvee \Sigma_{n})$.

To decrease the complexity of proof-search and obtain (worst-case) complexity-optimality, we introduce a new technique in the context of nested sequents which we refer to as \emph{coloring}. In essence, given the input $\sprset \sar \phi$, the first step of proof-search guesses a \emph{proper coloring} of the formula $\phi$, that is, it labels the formula's subformulae with either an \emph{active} label $\act$ or an \emph{inactive} label $\inact$ in a particular manner. Recall that, due to the $\conr$ rule, a proof in $\nspv$ has the structure of a binary tree, thus giving rise to the possibility that proof-search is exponential; %a proof is exponentially larger than its concluding theorem; 
 therefore, our proof-search algorithm uses the aforementioned labels to only generate a single path in this binary tree relative to each coloring, which yields a worst-case complexity-optimal proof-search procedure in $\conp$ (for the validity problem of $\spv$). %To make this more concrete, we provide a simple example below for the reader.

\begin{definition}[Coloring]\label{def:colored-formula} We define a \emph{colored formula} to be a formula generated via the following grammar in BNF:
$$
\clr{\phi} ::= p^{\ast} \ | \ \neg p^{\ast} \ |\ (\clr{\phi} \lor \clr{\phi})^{\ast} \ | \ (\clr{\phi} \land \clr{\phi})^{\ast} \ | \ (\spdia{s} \clr{\phi})^{\ast} \ | \ (\spbox{s} \clr{\phi})^{\ast}
$$
with $\ast \in \{\circ, \bullet\}$. For any colored formula $\clr{\phi}$, we let $\phi$ be the formula in $\langv$ obtained by removing all labels $\act$ and $\inact$ from $\clr{\phi}$. A formula $\clr{\phi}$ is \emph{properly colored} \iffi $\clr{\phi} = f_{\circ}(\phi)$, where the non-deterministic \emph{coloring function} $f_{\circ}$ and $f_{\bullet}$ are defined accordingly with $\ast \in \{\circ,\bullet\}$:
%\begin{multicols}{2}
\begin{itemize}

\item $\cf(p) = p^{\ast}$

\item $\cf(\neg p) = \neg p^{\ast}$

\item $\cf(\phi \lor \psi) = (\cf(\phi) \lor \cf(\psi))^{\ast}$

\item $f_{\circ}(\phi \land \psi) \in \{(f_{\circ}(\phi) \land f_{\bullet}(\psi))^{\circ}, (f_{\bullet}(\phi) \land f_{\circ}(\psi))^{\circ}\}$

\item $f_{\bullet}(\phi \land \psi) = (f_{\bullet}(\phi) \land f_{\bullet}(\psi))^{\bullet}$

\item $\cf(\spdia{s} \phi) = (\spdia{s} \cf(\phi))^{\ast}$

\item $\cf(\spbox{s} \phi) = (\spbox{s} \cf(\phi))^{\ast}$

\end{itemize}
%\end{multicols}

 We define $\prc{\phi}$ to be the set of all proper colorings of $\phi$, and define a \emph{colored nested sequent} to be a nested sequent that uses colored formulae as opposed to formulae from $\langv$. 
\end{definition}

We now stipulate our \emph{saturation conditions}. When such conditions are unsatisfied during proof-search it signals that certain inference rules still need to be applied bottom-up. %Alternatively, when the saturation conditions are met, proof-search ought to terminate.
 Alternatively, once all such conditions are satisfied this signals that proof-search ought to terminate.

\begin{definition}[Saturation Conditions]\label{def:saturation-conditions} A colored nested sequent $\sprset \sep \Sigma_{0}, (s_{1})[\Sigma_{1}]_{\pi_{1}}, \myDots, (s_{n})[\Sigma_{1}]_{\pi_{n}}$ is \emph{saturated} \ifandonlyif for every $i \in \{0, \myDots, n\}$ it satisfies the following conditions:
%\begin{itemize}

\begin{description}[leftmargin=1.8em,style=nextline]
\item[$\scid$]  If $p^{\act} \in \Sigma_{i}$, then $\neg p^{\act} \not\in \Sigma_{i}$;
\item[$\scdis$] if $(\clr{\phi} \lor \clr{\psi})^{\act} \in \Sigma_{i}$, then $\clr{\phi}^{\act}, \clr{\psi}^{\act} \in \Sigma_{i}$;
\item[$\sccon$] if $(\clr{\phi} \land \clr{\psi})^{\act} \in \Sigma_{i}$, then either $\clr{\phi}^{\act} \in \Sigma_{i}$ or $\clr{\psi}^{\act} \in \Sigma_{i}$;
\item[$\scspdia$] if $(\spdia{s} \clr{\phi})^{\act} \in \Sigma_{i}$ and $s' \sprel s$, then for each $j \in \{1, \myDots, n\}$ such that $s_{j} = s'$, $\clr{\phi}^{\act} \in \Sigma_{j}$;
\item[$\scastdia$] if $(\spdia{\ast} \clr{\phi})^{\act} \in \Sigma_{i}$,  then $\clr{\phi}^{\act} \in \Sigma_{0}$;
\item[$\scspbox$] if $(\spbox{s} \clr{\phi})^{\act} \in \Sigma_{i}$, then for some $j \in \{1, \myDots, n\}$, $s_{j} = s$, and $\clr{\phi}^{\act} \in \Sigma_{j}$;
\item[$\scns$] for each $s \in \spset$, there exists a 
 $j \in \{1, \myDots, n\}$ \\such that $s_{j} = s$.
\end{description}
\end{definition}

\begin{algorithm}[t] %[H]
\KwIn{A Nested Sequent: $\sprset \sar \phi$}
\KwOut{A Boolean: $\true$, $\false$}

Choose a proper coloring $\clr{\phi}$ of $\phi$;\\
\Return $\proofsearch(\sprset \vdash \clr{\phi})$;

\caption{$\prove$}\label{alg:Prove}
\end{algorithm}

\begin{algorithm}[h]
\KwIn{A Colored Nested Sequent: $\ns := \Gamma \vdash \Sigma_{0}, (s_{1})[\Sigma_{1}]_{\pi_{1}}, \myDots, (s_{n})[\Sigma_{n}]_{\pi_{n}}$}
\KwOut{A Boolean: $\true$, $\false$}

\If{for some $0\leq i\leq n$, $p^{\circ},\neg p^{\circ} \in \Sigma_{i}$}
     {\Return $\true$;}

\If{$\Sigma$ is saturated}
     {\Return $\false$;}
     
\If{for some $0\leq i\leq n$, $(\clr{\phi} \lor \clr{\psi})^{\act} \in \Sigma_{i}$, but $\clr{\phi}, \clr{\psi} \not\in \Sigma_{i}$}
{
    Let $\Sigma_{i}' := \Sigma_{i}, \clr{\phi}, \clr{\psi}$;\\
    Let $\ns' := \Gamma \! \vdash\! \Sigma_{0}, \myDots, (s_{i})[\Sigma_{i}']_{\pi_{i}}, \myDots, (s_{n})[\Sigma_{n}]_{\pi_{n}}$;\\
    \tcp{Replace $\Sigma_{i}$ by $\Sigma_{i}'$
    to obtain $\ns'$.}
    {    \Return $\prove(\ns')$;}
}

\If{for some $0\leq i\leq n$, $(\clr{\phi}^{\circ} \land \clr{\psi}^{\bullet})^{\circ} \in \Sigma_{i}$, but $\clr{\phi}^{\circ} \not\in \Sigma_{i}$}
{
    Let $\Sigma_{i}' := \Sigma_{i}, \clr{\phi}^{\circ}$;\\
    Let $\Phi' := \Gamma\! \vdash\! \Sigma_{0}, \myDots, (s_{i})[\Sigma_{i}']_{\pi_{i}}, \myDots, (s_{n})[\Sigma_{n}]_{\pi_{n}}$;\\
    \tcp{Replace $\Sigma_{i}$ by $\Sigma_{i}'$
    to obtain $\ns'$.}
    \Return $\prove(\ns')$
}

\If{for some $0\leq i\leq n$, $(\clr{\phi}^{\bullet} \land \clr{\psi}^{\circ})^{\circ} \in \Sigma_{i}$, but $\clr{\psi}^{\circ} \not\in \Sigma_{i}$}
{
    Let $\Sigma_{i}' := \Sigma_{i}, \clr{\psi}^{\circ}$;\\
    Let $\Phi' := \Gamma\! \vdash\! \Sigma_{0}, \myDots, (s_{i})[\Sigma_{i}']_{\pi_{i}}, \myDots, (s_{n})[\Sigma_{n}]_{\pi_{n}}$;\\
    \tcp{Replace $\Sigma_{i}$ by $\Sigma_{i}'$
    to obtain $\ns'$.}
    \Return $\prove(\ns')$
}

\caption{$\proofsearch$ (Part I)}\label{alg:Proof-search}
\end{algorithm}

Let us comment on the functionality of our (non-deterministic) proof-search algorithm $\prove$ (\alg~\ref{alg:Prove}), which takes $\proofsearch$ (\alg~\ref{alg:Proof-search}) as a subroutine. (NB. \alg~\ref{alg:Proof-search} is split between this page and the next due to its length.) As mentioned above, given an input $\sprset \sar \phi$, the algorithm $\prove$ guesses a proper coloring $\clr{\phi}$ of $\phi$, and then returns the value of $\proofsearch(\sprset \sar \clr{\phi})$. We note that $\proofsearch$ applies the rules from $\nspv$ in a bottom-up manner (each corresponding to a recursive call of the algorithm with the exception of $\id$). The application of each rule is as follows: $\id$ corresponds to lines 1--3, $\disr$ to lines 7--11, $\conr$ to lines 12--16 and 17--21, that respectively yields the left and right premises of $\conr$, $\spdiari{s}$ and $\spdiarii{s}$ to lines 22--25, $\spdiar{\dst}$ to lines 26--29, $\spboxr{s}$ to lines 30--33, and $\nr{s}$ to lines 34--37.

%Although proof-search algorithms typically apply inference rules of the calculus in a bottom-up manner with the goal of constructing a proof of the input, $\proofsearch$ is unique in that the  
 Moreover, $\proofsearch$ contrasts with typical proof-search algorithms in that it utilizes the active and inactive labels $\act$ and $\inact$ in $\clr{\phi}$ to guide its computation and only constructs \emph{a single thread} of the proof.\footnote{A \emph{thread} in a proof is defined in the usual fashion as a path of sequents from the conclusion of the proof to an initial sequent (cf.~\cite[p.~14]{Tak13}).} In other words, if a nested sequent $\sprset \sar \phi$ is derivable in $\nspv$, then the sequent has a proof in $\nspv$ such that $\proofsearch$ generates each thread of the proof relative to each proper coloring of $\phi$; as argued in the lemma below, all such threads may be `zipped' together to reconstruct a full proof of $\sprset \sar \phi$ in $\nspv$. In this way, our proof-search algorithm may be used to construct certificates witnessing the validity (by means of a proof in $\nspv$) or invalidity (by means of a counter-model) of any input $\sprset \sar \phi$ (see \thm~\ref{thm:correctness} below for details).

\begin{lemma}\label{lem:build-proof}
%Let $\Gamma \subseteq \langsharp$ and $\phi \in \langv$. 
%Assume that - for every -.
 Let $\sprset \sep \phi$ be a sequent and $\prc{\phi}$ the (finite) set of proper colorings of $\phi$. If $\proofsearch(\Gamma \vdash \clr{\phi}) = \true$ for all $\clr{\phi}\in\prc{\phi}$, then there is a proof of $\Gamma \vdash \phi$ in $\nspv$. 
\end{lemma}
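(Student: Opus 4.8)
The plan is to prove the lemma by \emph{zipping} the single-threaded partial derivations witnessed by the successful runs of $\proofsearch$ into one fully branching proof in $\nspv$. The guiding observation is that $\conr$ is the only branching rule, so a complete derivation of $\sprset \sep \phi$ is a tree all of whose forks arise from conjunctions, whereas a successful run $\proofsearch(\sprset \sep \clr\phi)$ traverses a single thread: at each active conjunction the coloring $\clr\phi$ has already committed, via the clause $f_{\circ}(\phi \land \psi) \in \{(f_{\circ}(\phi) \land f_{\bullet}(\psi))^{\circ},(f_{\bullet}(\phi) \land f_{\circ}(\psi))^{\circ}\}$ of \dfn~\ref{def:colored-formula}, to exactly one conjunct, which is precisely the premise of $\conr$ that this thread explores. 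Thus each $\clr\phi \in \prc\phi$ contributes the thread for one pattern of left/right commitments at the conjunctions, and $\prc\phi$ ranges over all such patterns; the goal is to show these threads agree along their shared deterministic segments and fork exactly at the $\conr$ nodes, thereby reconstituting the full tree.

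To have room for the induction I would first generalize the statement from $\sprset \sep \phi$ to an arbitrary colored nested sequent $\clr\ns$: if every proper coloring of the formulas of $\clr\ns$ makes $\proofsearch$ return $\true$, then the underlying uncolored sequent is derivable in $\nspv$. I would then induct on the well-founded measure underlying the termination argument (\thm~\ref{thm:termination}), which strictly decreases under every bottom-up rule application. For each deterministic saturation step of \dfn~\ref{def:saturation-conditions}---$\disr$, $\spboxr{s}$, $\nr{s}$, $\spdiari{s}$, $\spdiarii{s}$, and $\spdiar{\dst}$---all surviving colorings trigger the same application, so I apply the corresponding $\nspv$ rule once and hand the smaller premise to the induction hypothesis. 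The interesting case is an unsaturated active conjunction $(\clr\phi \land \clr\psi)^{\circ}$ in a component $\Sigma_i$: here I apply $\conr$ to obtain the two premises (adding $\clr\phi^{\circ}$, respectively $\clr\psi^{\circ}$, to $\Sigma_i$), partition the colorings into those activating the left conjunct and those activating the right, feed each class to the induction hypothesis on the respective premise, and combine the two returned subproofs under the $\conr$ inference; the recursion terminates at $\id$-leaves detected on the first line of $\proofsearch$.

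The main obstacle, and the place where the induction must be set up with care, is this $\conr$ step in the presence of \emph{copying}. A single syntactic conjunction of $\phi$ may be duplicated into several components by $\spboxr{s}$ and, above all, by the $\scspdia$ saturation clause, which sends a diamond's argument into \emph{every} eligible component; one occurrence of $(\clr\phi \land \clr\psi)$ therefore spawns several $\conr$ branch points in the full tree, and I must guarantee that the partition of colorings furnishes closed subproofs for \emph{both} premises at \emph{every} such occurrence, not merely for the ``matched'' global choices. The key that makes this work is that the generalized hypothesis quantifies over all proper colorings of the \emph{current} sequent---so copies, being distinct occurrences introduced as fresh colored formulas by the modal rules, are colored independently---and the conjunction clause of $f_{\circ}$ ensures the left-activating and right-activating colorings between them cover the two premises. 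The crux is thus to verify, rule by rule, that ``all colorings of the conclusion succeed'' propagates to ``all colorings of each premise succeed'', i.e.\ that no required branch is left without a witnessing coloring; once this propagation is established, the inductive assembly of the $\conr$ nodes yields the desired proof of $\sprset \sep \phi$, and correctness dovetails with \thm~\ref{thm:correctness}.
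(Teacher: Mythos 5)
Your overall strategy---zipping single threads into one proof, with $\conr$ as the only branch points---is the same as the paper's, and you correctly identify a subtlety that the paper's own construction glosses over: when a conjunction gets copied into several components (chiefly via the $\scspdia$ saturation condition), every thread makes the \emph{same} left/right choice at all copies, because all copies inherit one and the same coloring from the diamond's argument. The paper's recursion (split the thread set $\mathcal{T}$ into $\mathcal{T}_{L}$ and $\mathcal{T}_{R}$ at each conjunctive sequent and recurse) tacitly assumes that both classes remain non-empty at every level, and this is exactly what fails at the second copy of a duplicated conjunction; flagging this is genuine added value.

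However, your proposed repair has a gap at its center. Your generalized induction hypothesis quantifies over \emph{all} proper colorings of the current sequent, with distinct occurrences colored independently; but the lemma's hypothesis, and the runs of $\proofsearch$ it supplies, only ever produce \emph{correlated} colorings: every formula that $\proofsearch$ adds carries the coloring inherited from the subformula occurrence that triggered its addition, so no run ever witnesses, say, a premise of $\conr$ in which the conjunction occurrence is colored right-active while the freshly added left conjunct carries an unrelated proper coloring. Consequently, at the $\conr$ step, partitioning the conclusion's colorings into left- and right-activating classes does \emph{not} discharge the induction hypothesis for either premise: that hypothesis demands success on mixed colorings that neither class contains. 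This is precisely the ``propagation'' you defer to a rule-by-rule verification, but it is not a routine check; it amounts to a monotonicity (simulation) lemma for $\proofsearch$, stating roughly that enlarging a colored sequent by extra actively colored formulas, or re-coloring duplicated occurrences independently, cannot turn a successful run into one that saturates and returns $\false$. Proving that requires constructing an embedding of the components of the smaller run's final sequent into those of the larger run's saturated sequent (matching nestings created by $\spboxr{s}$ and $\nr{s}$, and using the saturation conditions to transport $\id$-pairs). Without this lemma your induction does not close; with it, your argument goes through and would in fact repair, rather than merely reproduce, the proof given in the paper.
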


\begin{proof} Assume that $\proofsearch(\sprset \vdash \clr{\phi}) = \true$ for every $\clr{\phi}\in\prc{\phi}$, and the following thread of colored nested sequents is generated during its execution: 
\begin{align*}
    T(\clr{\phi}) := \sprset \sar \nseq_{0}, \myDots, \sprset \sar \nseq_{h}
\end{align*}
%T(i) := (\sprset \sar \nseq_{j})_{j=0}^{h(i)} = \sprset \sar \nseq_{0}, \myDots, \sprset \sar \nseq_{h(i)}
 such that $\nseq_{0} = \clr{\phi}$ and $\sprset \sar \nseq_{h}$ is an instance of $\id$ (by lines 1--3). Let $\mathcal{T}$ be the set of all such threads.
 %, with $h$ the height of $\clr{\phi}$.  

%\noindent We define a colored nested sequent $\sprset \sar \nseq_{k}$ of a thread to be\vspace{-0.5em}

For a thread $\sprset \sar \nseq_{0}, \myDots, \sprset \sar \nseq_{h} \in \mathcal{T}$ and $0 \leq k \leq h$, the colored nested sequent  $\sprset \sar \nseq_{k}$ is:
\vspace{-0.65em}
\begin{itemize}
\item \emph{left conjunctive} \iffi $\sprset \sar \nseq_{k+1}$ is obtained from $\sprset \sar \nseq_{k}$ by applying $\conr$ yielding the left premise (lines 12--16), and
\item \emph{right conjunctive} \iffi $\sprset\! \sar\! \nseq_{k+1}$ is obtained from $\sprset \sar \nseq_{k}$ by applying $\conr$ yielding the right premise (lines 17--21).
\end{itemize}
\vspace{-0.5em}
\noindent A colored nested sequent is \emph{conjunctive} \iffi it is left or right conjunctive. We now explain how our threads may be transformed into a proof of $\sprset \sar \phi$ in $\nspv$.

 We assume w.l.o.g. that the initial segments of all threads up to and including the first conjunctive sequent $\sprset \sar \nseq_{k}$ are identical (i.e. we assume that the subroutine $\proofsearch$ executes in a deterministic fashion). Hence, we may form the `pseudo-derivation' shown below left, by making use of the first $k$ sequents of any given thread, where the rules $(r_{1}), \myDots, (r_{k-1})$ are determined on the basis of which lines of $\proofsearch$ were executed. 
\begin{center}
\begin{tabular}{|c|c|}
\AxiomC{$\sprset \sar \nseq_{k}$}
\RightLabel{$(r_{k-1})$}
\UnaryInfC{$\vdots$}
\RightLabel{$(r_{1})$}
\UnaryInfC{$\sprset \sar \phi$}
\DisplayProof

&

\AxiomC{$\sprset \sar \nseq_{k+1}^{l}$}
\AxiomC{$\sprset \sar \nseq_{k+1}^{r}$}
\RightLabel{$\conr$}
\BinaryInfC{$\sprset \sar \nseq_{k}$}
\RightLabel{$(r_{k-1})$}
\UnaryInfC{$\vdots$}
\RightLabel{$(r_{1})$}
\UnaryInfC{$\sprset \sar \phi$}
\DisplayProof 
\end{tabular}
\end{center}

 Let us define $T^{k}(\clr{\phi}) := \sprset \sar \nseq_{k+1}, \myDots, \sprset \sar \nseq_{h}$ to be the tail of a thread $T(\clr{\phi})$ starting from $k+1$. Since the $k^{th}$ colored nested sequent of every thread is conjunctive, we may generate two sets of threads $\mathcal{T}_{L}$ and $\mathcal{T}_{R}$ from $\mathcal{T}$:
\begin{align*}
\mathcal{T}_{L} := \{ T^{k}(\clr{\phi}) \ | \ \sprset \sar \nseq_{k} \in T(\clr{\phi}) \text{ is left conjunctive}\} \\
\mathcal{T}_{R} := \{ T^{k}(\clr{\phi})  \ | \ \sprset \sar \nseq_{k} \in T(\clr{\phi}) \text{ is right conjunctive}\}    
\end{align*}
 
We now extend the `pseudo-derivation' shown above left, with a bottom-up application of $\conr$ to obtain the `pseudo-derivation' shown above right, where $\sprset \sar \nseq_{k+1}^{l}$ and $\sprset \sar \nseq_{k+1}^{r}$ are the initial elements of  each thread $\mathcal{T}_{L}$ and $\mathcal{T}_{R}$%(occurring directly after $\sprset \sar \nseq_{k}$ in the given threads)
, respectively. By successively repeating the above described process over $\mathcal{T}_{L}$ and $\mathcal{T}_{R}$, a proof in $\nspv$ will eventually be built above $\sprset \sar \nseq_{k+1}^{l}$ and $\sprset \sar \nseq_{k+1}^{r}$, giving a proof of $\sprset \sar \phi$ in $\nspv$.
\end{proof}

\setcounter{algocf}{1}
\begin{algorithm}[h]%[t]
\setcounter{AlgoLine}{21}

\If{for some $0\leq i\leq n$, $(\spdia{s} \clr{\phi})^{\act} \in \Sigma_{i}$ and $s' \sprel s$, and for some $1\leq j \leq n$, $s_{j} = s'$, but $\clr{\phi} \not\in \Sigma_{j}$}
{
    Let $\ns' := \Gamma \! \vdash\! \Sigma_{0}, \myDots, (s_{j})[\Sigma_{j}, \clr{\phi}]_{\pi_{j}}, \myDots, (s_{n})[\Sigma_{n}]_{\pi_{n}}$;\\
    \tcp{Add $\clr{\phi}$ to the $j^{th}$ nesting
    to obtain $\ns'$.}
    {    \Return $\prove(\ns')$;}
}

\If{for some $0\leq i\leq n$, $(\spdia{\dst} \clr{\phi})^{\act} \in \Sigma_{i}$, but $\clr{\phi} \not\in \Sigma_{0}$}
{
    Let $\ns' := \Gamma \vdash\ \clr{\phi}, \Sigma_{0}, (s_{1})[\Sigma_{1}]_{\pi_{1}}, \myDots, (s_{n})[\Sigma_{n}]_{\pi_{n}}$;\\
    \tcp{Add $\clr{\phi}$ to the $0^{th}$ component
    to obtain $\ns'$.}
    {    \Return $\prove(\ns')$;}
}

\If{for some $0\leq i\leq n$, $(\spbox{s} \clr{\phi})^{\act} \in \Sigma_{i}$, but for each $1\leq j \leq n$ such that $s_{j} = s$, $\clr{\phi} \not\in \Sigma_{j}$}
{
    Let $\Sigma' := \Gamma \vdash \Sigma_{0}, \myDots, (s_{n})[\Sigma_{n}]_{\pi_{n}}, (s)[\clr{\phi}]_{\pi_{n+1}}$;\\ \tcp{Append $(s)[\clr{\phi}]_{\pi_{n+1}}$ to obtain $\ns'$ with $\pi_{n+1}$ fresh.}
    {    \Return $\prove(\ns')$;}
}

\If{for some $s \in \spset$ there does not exist a $1\leq j \leq n$ such that $s_{j} = s$}
{
    Let $\ns' := \Gamma \vdash \Sigma_{0}, \myDots, (s_{n})[\Sigma_{n}]_{\pi_{n}}, (s)[\emptyset]_{\pi_{n+1}}$;\\ \tcp{Append $(s)[\emptyset]_{\pi_{n+1}}$ to obtain $\ns'$ with $\pi_{n+1}$ fresh.}
    {    \Return $\prove(\ns')$;}
}

\caption{$\proofsearch$ (Part II)}
\end{algorithm}

\begin{example} To illustrate the procedure in \lem~\ref{lem:build-proof} above, we provide an example of the proof construction process of $\emptyset \sar \phi$ with $\phi := \spbox{s}(p \lor (\neg p \land \neg p)) \land \spdia{s} (q \lor \neg q)$. First, observe that $\phi$ has three proper colorings, given below:
\begin{description}

\item[$\clr{\phi}^{1} :=$] $((\spbox{s}(p^{\act} \lor (\neg p^{\act} \land \neg p^{\inact})^{\act})^{\act})^{\act} \land (\spdia{s} (q^{\inact} \lor \neg q^{\inact})^{\inact})^{\inact})^{\act}$

\item[$\clr{\phi}^{2} :=$] $((\spbox{s}(p^{\act} \lor (\neg p^{\inact} \land \neg p^{\act})^{\act})^{\act})^{\act} \land (\spdia{s} (q^{\inact} \lor \neg q^{\inact})^{\inact})^{\inact})^{\act}$

\item[$\clr{\phi}^{3} :=$] $((\spbox{s}(p^{\inact} \lor (\neg p^{\inact} \land \neg p^{\inact})^{\inact})^{\inact})^{\inact} \land (\spdia{s} (q^{\act} \lor \neg q^{\act})^{\act})^{\act})^{\act}$

\end{description}
Each proper coloring $\clr{\phi}^{i}$ with $i \in \{1,2,3\}$ gives rise to a corresponding thread $T(i)$ (shown at the top of \fig~\ref{fig:proof-construction-example}) when $\proofsearch(\emptyset \sar \clr{\phi}^{i})$ is run. Note that in the figure we have omitted the active and inactive labels, as well as labels of the form $\pi_{i}$, from each thread $T(i)$ to improve readability, and each application of $\conr$ is emphasized with a dashed inference line. By making use of the proof construction process described in \lem~\ref{lem:build-proof}, the three threads $T(1)$, $T(2)$, and $T(3)$ can be fused together to generate a proof of $\emptyset \sar \phi$ shown at the bottom of \fig~\ref{fig:proof-construction-example}.
\end{example}

\begin{figure*}
\begin{center}
\begin{tabular}{c c c}
$T(1) :=$ & $T(2) :=$ & $T(3) :=$ \vspace{0.5em}\\ 

\AxiomC{ }
\RightLabel{$\id$}
\UnaryInfC{$\emptyset \sar \phi, \psi_{0}, (s)[\psi_{1}, p, \neg p, \psi_{2}]$}
\RightLabel{$\conr$}
\dashedLine
\UnaryInfC{$\emptyset \sar \phi, \psi_{0}, (s)[\psi_{1}, p, \neg p \land \neg p]$}
\RightLabel{$\disr$}
\UnaryInfC{$\emptyset \sar \phi, \psi_{0}, (s)[p \lor (\neg p \land \neg p)]$}
\RightLabel{$\spboxr{s}$}
\UnaryInfC{$\emptyset \sar \phi, \spbox{s}(p \lor (\neg p \land \neg p))$}
\RightLabel{$\conr$}
\dashedLine
\UnaryInfC{$\emptyset \sar \phi$}
\DisplayProof

&

\AxiomC{ }
\RightLabel{$\id$}
\UnaryInfC{$\emptyset \sar \phi, \psi_{0}, (s)[\psi_{1}, p, \neg p, \psi_{2}]$}
\RightLabel{$\conr$}
\dashedLine
\UnaryInfC{$\emptyset \sar \phi, \psi_{0}, (s)[\psi_{1}, p, \neg p \land \neg p]$}
\RightLabel{$\disr$}
\UnaryInfC{$\emptyset \sar \phi, \psi_{0}, (s)[p \lor (\neg p \land \neg p)]$}
\RightLabel{$\spboxr{s}$}
\UnaryInfC{$\emptyset \sar \phi, \spbox{s}(p \lor (\neg p \land \neg p))$}
\RightLabel{$\conr$}
\dashedLine
\UnaryInfC{$\emptyset \sar \phi$}
\DisplayProof

&

\AxiomC{ }
\RightLabel{$\id$}
\UnaryInfC{$\emptyset \sar \phi, \spdia{s} (q \lor \neg q), (s)[\psi_{3}, q, \neg q]$}
\RightLabel{$\disr$}
\UnaryInfC{$\emptyset \sar \phi, \spdia{s} (q \lor \neg q), (s)[q \lor \neg q]$}
\RightLabel{$\spdiar{s}$}
\UnaryInfC{$\emptyset \sar \phi, \spdia{s} (q \lor \neg q), (s)[\emptyset]$}
\RightLabel{$\nr{s}$}
\UnaryInfC{$\emptyset \sar \phi, \spdia{s} (q \lor \neg q)$}
\RightLabel{$\conr$}
\dashedLine
\UnaryInfC{$\emptyset \sar \phi$}
\DisplayProof
\end{tabular}
\end{center}

 \vspace{0.5em}

\begin{center}

\AxiomC{ }
\RightLabel{$\id$}
\UnaryInfC{$\emptyset \sar \phi, \psi_{0}, (s)[\psi_{1}, p, \neg p, \psi_{2}]$}
\AxiomC{ }
\RightLabel{$\id$}
\UnaryInfC{$\emptyset \sar \phi, \psi_{0}, (s)[\psi_{1}, p, \neg p, \psi_{2}]$}
\RightLabel{$\conr$}
\BinaryInfC{$\emptyset \sar \phi, \psi_{0}, (s)[\psi_{1}, p, \neg p \land \neg p]$}
\RightLabel{$\disr$}
\UnaryInfC{$\emptyset \sar \phi, \psi_{0}, (s)[p \lor (\neg p \land \neg p)]$}
\RightLabel{$\spboxr{s}$}
\UnaryInfC{$\emptyset \sar \phi, \spbox{s}(p \lor (\neg p \land \neg p))$}
\AxiomC{ }
\RightLabel{$\id$}
\UnaryInfC{$\emptyset \sar \phi, \spdia{s} (q \lor \neg q), (s)[\psi_{3}, q, \neg q]$}
\RightLabel{$\disr$}
\UnaryInfC{$\emptyset \sar \phi, \spdia{s} (q \lor \neg q), (s)[q \lor \neg q]$}
\RightLabel{$\spdiar{s}$}
\UnaryInfC{$\emptyset \sar \phi, \spdia{s} (q \lor \neg q), (s)[\emptyset]$}
\RightLabel{$\nr{s}$}
\UnaryInfC{$\emptyset \sar \phi, \spdia{s} (q \lor \neg q)$}
\RightLabel{$\conr$}
\BinaryInfC{$\emptyset \sar \phi$}
\DisplayProof
\end{center}

\caption{An example of how the threads $T(1)$, $T(2)$, and $T(3)$ may be `zipped' together to construct the derivation shown above bottom. Note that $\phi := \spbox{s}(p \lor (\neg p \land \neg p)) \land \spdia{s} (q \lor \neg q)$, $\psi_{0} := \spbox{s}(p \lor (\neg p \land \neg p))$, $\psi_{1} := p \lor (\neg p \land \neg p)$, $\psi_{2} := \neg p \land \neg p$, and $\psi_{3} := q \lor \neg q$.}\label{fig:proof-construction-example}
\end{figure*}

\begin{theorem}[Correctness]\label{thm:correctness}
%Let $\Gamma \subseteq \langsharp$ and $\phi \in \langv$. 
 Let $\sprset \sep \phi$ be a sequent.
\begin{enumerate}
    \item If $\proofsearch(\sprset \sep \clr{\phi}) = \true$ for all proper colorings of $\phi$, then a proof in $\nspv$ may be constructed witnessing that $\sprset \sep \phi$ is valid. 
    
    \item If $\proofsearch(\sprset \sep \clr{\phi}) = \false$ for some proper coloring of $\phi$, then a counter-model may be constructed witnessing that $\sprset \sep \phi$ is invalid.
\end{enumerate}
\end{theorem}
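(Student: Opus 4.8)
The plan is to dispatch the two parts separately: Part~1 is essentially immediate from earlier results, while Part~2 requires extracting a counter-model from a failed (saturated) branch of proof-search. For Part~1, I would simply chain \lem~\ref{lem:build-proof} with \thm~\ref{thm:soundness}. By hypothesis $\proofsearch(\sprset \vdash \clr{\phi}) = \true$ for every $\clr{\phi} \in \prc{\phi}$, so \lem~\ref{lem:build-proof} yields a genuine proof of $\sprset \vdash \phi$ in $\nspv$; soundness then guarantees that this derivable sequent is valid, i.e.\ that $\iota(\sprset \vdash \phi)$ is valid, which is exactly the witnessed validity claimed.

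For Part~2, I would first pin down how $\proofsearch$ can return $\false$: inspecting the algorithm, the only $\false$-exit fires once the current colored nested sequent is \emph{saturated} (in the sense of \dfn~\ref{def:saturation-conditions}) while failing the $\id$-test, i.e.\ no component contains both $p^{\act}$ and $\neg p^{\act}$. Fixing such a saturated sequent $\ns^{*} = \sprset \vdash \Sigma_{0}, (s_{1})[\Sigma_{1}]_{\pi_{1}}, \myDots, (s_{n})[\Sigma_{n}]_{\pi_{n}}$ reached for the offending coloring, I would read off a model $\model = \langle \Pi, \sigma, \delta \rangle$ whose precisifications are the labels of $\ns^{*}$: put $\Pi = \{\pi_{0}, \myDots, \pi_{n}\}$ (with $\pi_{0}$ naming the root $\Sigma_{0}$), set $s_{0} := \dst$ as a bookkeeping device, and define
$$\sigma(s) := \{\pi_{j} \mid 0 \le j \le n \text{ and } s_{j} \sprel s\}, \qquad \delta(p) := \{\pi_{j} \mid 0 \le j \le n \text{ and } \neg p^{\act} \in \Sigma_{j}\}.$$
Because $s \sprel \dst$ for all $s$, this automatically yields $\sigma(\dst) = \Pi$, matching the semantic clause for $\dst$; and because $s_{0} = \dst \not\sprel s$ whenever $s \neq \dst$, the root $\pi_{0}$ is an $s$-successor of no proper standpoint.

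The heart of the argument is a truth lemma, proved by induction on the structure of the colored formula: for every $j$ with $0 \le j \le n$ and every active colored formula $\clr{\psi}^{\act} \in \Sigma_{j}$, one has $\model, \pi_{j} \not\models \psi$. The base cases $p^{\act}, \neg p^{\act}$ follow from the choice of $\delta$ together with the $\scid$-condition (which forbids $p^{\act}$ and $\neg p^{\act}$ coexisting in $\Sigma_{j}$). The $\scdis$-condition feeds both disjuncts into $\Sigma_{j}$, so the inductive hypothesis falsifies each and hence the disjunction; the $\sccon$-condition supplies the single \emph{active} conjunct (the coloring makes exactly one conjunct active), which the inductive hypothesis falsifies, and one false conjunct suffices to falsify the conjunction. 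For $(\spbox{s}\clr{\psi})^{\act} \in \Sigma_{j}$, the $\scspbox$-condition produces a nesting with $s_{j'} = s$ and $\clr{\psi}^{\act} \in \Sigma_{j'}$; since $s_{j'} = s$ and $\sprel$ is reflexive we get $\pi_{j'} \in \sigma(s)$, where the inductive hypothesis makes $\psi$ false, falsifying $\spbox{s}\psi$. For $(\spdia{s}\clr{\psi})^{\act} \in \Sigma_{j}$ with $s \neq \dst$, I would show $\psi$ is false at \emph{every} $\pi_{j'} \in \sigma(s)$: such a $\pi_{j'}$ satisfies $s_{j'} \sprel s$, so taking $s' = s_{j'}$ in the $\scspdia$-condition places $\clr{\psi}^{\act} \in \Sigma_{j'}$, whence the inductive hypothesis applies. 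The universal case $(\spdia{\dst}\clr{\psi})^{\act}$ is the one needing two conditions at once: $\scspdia$ with $s = \dst$ (so that $s' \sprel \dst$ holds for all $s'$) forces $\clr{\psi}^{\act}$ into every proper nesting $\Sigma_{1}, \myDots, \Sigma_{n}$, while $\scastdia$ forces it into the root $\Sigma_{0}$, and together they falsify $\psi$ across all of $\sigma(\dst) = \Pi$.

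Finally I would assemble the conclusion. Since the input coloring $\clr{\phi} = f_{\circ}(\phi)$ carries an outer $\act$-label and is never deleted during proof-search, $\clr{\phi}^{\act} \in \Sigma_{0}$ in $\ns^{*}$, so the truth lemma gives $\model, \pi_{0} \not\models \phi$. Moreover $\model, \pi_{0} \models \bigwedge \sprset$: for each $s \preceq s' \in \sprset$ we have $s \sprel s'$, and transitivity of $\sprel$ gives $\sigma(s) \subseteq \sigma(s')$, i.e.\ $\model \models s \preceq s'$. Hence $\model, \pi_{0} \not\models \iota(\sprset \vdash \phi) = \bigwedge\sprset \rightarrow \phi$, so $\sprset \vdash \phi$ is invalid and $\model$ is the promised counter-model. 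I expect the main obstacle to be the modal cases of the truth lemma: getting the definition of $\sigma$ (in particular the treatment of the root $\pi_{0}$ and of the universal standpoint $\dst$ via the bookkeeping $s_{0} := \dst$) to line up exactly with the saturation conditions, and confirming that the coloring discipline for conjunctions is precisely what makes a single active, falsified conjunct sufficient.
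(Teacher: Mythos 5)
Your proposal is correct and follows essentially the same route as the paper: Part 1 by chaining \lem~\ref{lem:build-proof} with \thm~\ref{thm:soundness}, and Part 2 by extracting a model from the saturated sequent and proving a truth lemma by induction using the saturation conditions. Your closed-form $\sigma(s) := \{\pi_{j} \mid s_{j} \sprel s\}$ (with $s_{0} := \dst$) and your $\delta(p) := \{\pi_{j} \mid \neg p^{\act} \in \Sigma_{j}\}$ differ cosmetically from the paper's recursive $\sigma$ and its $\delta(p) := \{\pi_{i} \mid p^{\act} \notin \Sigma_{i}\}$, but both variants are equivalent for the purposes of the argument and all inductive cases go through as in the paper.
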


\begin{proof} The first claim follows by \lem~\ref{lem:build-proof} and the soundness of each nested calculus (see \thm~\ref{thm:soundness}); therefore, we focus on the second claim. Suppose that $\proofsearch(\sprset~\sep~\clr{\phi}) = \false$ for some chosen proper coloring of $\phi$. Then, $\proofsearch$ generates a saturated nested sequent $\sprset \sep \nseq$ with $\nseq$ of the form $\Sigma_{0}, (s_{1})[\Sigma_{1}]_{\pi_{1}}, \myDots, (s_{n})[\Sigma_{n}]_{\pi_{n}}$. We will use $\sprset \sep \nseq$ to construct a counter-model for $\sprset \sep \phi$, thus proving the second claim. Let us define $\model := \langle \Pi, \sigma, \delta \rangle$ as:
\SetLabelAlign{CenterWithParen}{\hfil#1\hfil}
\begin{itemize}
    \item $\Pi := \{\pi_{0}, \pi_{1}, \myDots, \pi_{n}\}$;
%    \item if $s \in \spset \setminus \{\dst\}$ and $s = s_{i}$ for some $i \in \{1,\myDots, n\}$, then let $\sigma(s) := \{\pi_{i} \ | \ s_{i} = s\}$;
    \item $\sigma(\dst) := \Pi$ and for each $s \in \spset \setminus \{\dst\}$, we define $\sigma(s)$:
\begin{enumerate}[label=(\roman*),leftmargin=1em,labelwidth=1em,align=CenterWithParen]
    \item $\pi_{i}\in\sigma(s)$ for each $i \in \{1, \myDots, n\}$ such that $s_{i} = s$, and 
    \item $\pi_{j}\in\sigma(s)$ for each $\pi_{j} \in \sigma(s')$ %and $s' \in \spset \setminus \{\dst\}$ %that has not yet been considered 
    such that $s' \sprel s$;
\end{enumerate} 
    \item $\delta(p) := \{\pi_{i} \ | \ p^{\act} \not\in \Sigma_{i}\}$.
\end{itemize}
We know that the recursive definition of $\sigma$ will eventually terminate since $\spset$ is finite.

Let us now prove that $\model$ is indeed a standpoint model; afterward, we will show that $\model, \pi_{0} \not\models \iota(\sprset \sep \nseq)$. First, observe that the inclusion of $\pi_{0}$ in $\Pi$ ensures that the domain of precisifications is non-empty. Furthermore, $\delta$ is a function from $\pset$ to $2^{\Pi}$, for each $s \in \spset$, we have that $\sigma(s) \neq \emptyset$ by the $\scns$ saturation condition, and $\sigma(\dst) = \Pi$ by definition. We now show that (a) for each $s' \preceq s \in \Gamma$ and $i \in \{0, \myDots, n\}$, $\model, \pi_{i} \models s' \preceq s$, and (b) for each $i \in \{0, \myDots, n\}$ and $\clr{\psi} \in \Sigma_{i}$, $\model, \pi_{i} \not\models \psi$, from which $\model, \pi_{0} \not\models \iota(\sprset \sep \phi)$ follows as $\clr{\phi} \in \Sigma_{0}$.

(a) Let $s' \preceq s \in \Gamma$ %, then according to the satisfaction function for all $\pi\in\Pi$ if $\pi \in \sigma(s')$ then $\pi \in \sigma(s)$ and if $\pi \notin \sigma(s)$ then $\pi \notin \sigma(s')$.  
 and assume that $\pi \in \sigma(s')$.  Then, by the second clause in the definition of $\sigma(s)$ it follows that $\pi \in \sigma(s)$, thus showing that $\model, \pi \models s \preceq s'$ for all $\pi \in \Pi$.

(b) By induction on the complexity of $\psi$.

\textit{Base case.} First, let us suppose that $\psi$ is a propositional atom $p$ where $p^{\act} \in \Sigma_{i}$. Then, by the definition of $\delta$, we have that $\pi_{i} \not\in \delta(p)$, i.e. $\model, \pi_{i} \not\models p$. Similarly, if $\psi$ is a negated atom $\neg p$ where $\neg p^{\circ} \in \Sigma_{i}$, then by the $\scid$ saturation condition, we know that $p^{\circ} \not\in \Sigma_{i}$, implying that $\pi_{i} \in \delta(p)$ by the definition of $\delta$, thus showing that $\model, \pi_{i} \not\models \neg p$. 

\textit{Inductive step.} We suppose that $\clr{\psi}^{\act} \in \Sigma_{i}$ and we show that $\model, \pi_{i} \not\models \psi$ making a case distinction based on the main connective of $\psi$. % and suppose that $s \in \spset \setminus \{\dst\}$.
\vspace{-0.5em}
\begin{description} %[leftmargin=0em,labelwidth=-1em]
     
\item[$\psi = \chi \lor \xi$:] By the $\scdis$ saturation condition, $\clr{\chi}^{\act}, \clr{\xi}^{\act} \in \Sigma_{i}$. By IH, $\model, \pi_{i} \not\models \chi$ and $\model, \pi_{i} \not\models \xi$, showing $\model, \pi_{i} \not\models \psi$.

\item[$\psi = \chi \land \xi$:] By the $\sccon$ saturation condition, $\clr{\chi}^{\act} \in \Sigma_{i}$ or $\clr{\xi}^{\act} \in \Sigma_{i}$. By IH, $\model, \pi_{i} \not\models \chi$ or $\model, \pi_{i} \not\models \xi$, so  $\model, \pi_{i} \not\models \psi$.

\item[{\rm $\psi = \spdia{s} \chi \text{ with } s \in \spset \setminus \{\dst\}$}:]  %We aim to show that for all $\pi_{j} \in \sigma(s)$, $\model, \pi_{j} \not\models \chi$.
Assume that $\pi_{j} \in \sigma(s)$. By the definition of $\sigma(s)$, $\pi_{j}$ was added via the first (i) or second (ii) condition. If (i), then there is a nesting $(s)\{\Sigma_{j}\}_{\pi_{j}}$ in $\nseq$, and by the $\scspdia$ saturation condition, $\clr{\chi}^{\act} \in \Sigma_{j}$. By IH then, $\model, \pi_{j} \not\models \chi$. If (ii), then there is a chain of statements $s_{0} \preceq s_{1}, \myDots, s_{n} \preceq s \in \Gamma$ such that a nesting of the form $(s_{0})\{\Sigma_{j}\}_{\pi_{j}}$ exists in $\nseq$. Again, by the $\scspdia$ saturation condition, it follows that $\clr{\chi}^{\act} \in \Sigma_{j}$, which implies that $\model, \pi_{j} \not\models \chi$ by IH. Hence, $\model, \pi_{i} \not\models \spdia{s} \chi$.

\item[$\psi = \spdia{\ast} \chi$:] Since $\dst$ is maximal relative to the $\sprel$ relation, i.e. $s \sprel \dst$ for all $s \in \spset$, we have $\clr{\chi}^{\act} \in \Sigma_{j}$ for each $j \in \{0, \myDots, n\}$ by the $\scastdia$ and $\scspdia$ saturation conditions, hence for all $\pi_{j} \in \Pi$, $\model, \pi_{j} \not\models \chi$, i.e. $\model, \pi_{i} \not\models \spdia{\dst} \chi$.

\item[$\psi = \spbox{s} \chi$:] By the $\scspbox$ saturation condition, we know that there exists a $j \in \{1, \myDots, n\}$ such that $s_{j} = s$ and $\clr{\chi}^{\act} \in \Sigma_{j}$. By IH, $\model, \pi_{j} \not\models \chi$, thus proving the case.\qedhere
\end{description}
\end{proof}

\begin{example} Let us provide an example of the counter-model construction procedure given in \thm~\ref{thm:correctness}. We assume that the (invalid) sequent $s \preceq s' \sar \phi$ with $\phi := \spbox{s'}p \lor \spdia{s}\neg p$ is input into $\prove$.  Since $\phi$ has one proper coloring (with all subformulae active), only the following single thread is generated, yielding the saturated sequent shown at the top of the proof below. We omit the active labels for readability.
\begin{center}
\resizebox{\columnwidth}{!}{
\AxiomC{$s \preceq s' \sar \phi, \spbox{s'}p, \spdia{s}\neg p, (s')[p]_{\pi_{1}}, (s)[\neg p]_{\pi_{2}}, (*)[\emptyset]_{\pi_{3}}$}
\RightLabel{$\nr{*}$}
\UnaryInfC{$s \preceq s' \sar \phi, \spbox{s'}p, \spdia{s}\neg p, (s')[p]_{\pi_{1}}, (s)[\neg p]_{\pi_{2}}$}
\RightLabel{$\spdiari{s}$}
\UnaryInfC{$s \preceq s' \sar \phi, \spbox{s'}p, \spdia{s}\neg p, (s')[p]_{\pi_{1}}, (s)[\emptyset]_{\pi_{2}}$}
\RightLabel{$\nr{s}$}
\UnaryInfC{$s \preceq s' \sar \phi, \spbox{s'}p, \spdia{s}\neg p, (s')[p]_{\pi_{1}}$}
\RightLabel{$\spboxr{s'}$}
\UnaryInfC{$s \preceq s' \sar \phi, \spbox{s'}p, \spdia{s}\neg p$}
\RightLabel{$\disr$}
\UnaryInfC{$s \preceq s' \sar \phi$}
\DisplayProof
}
\end{center}

\vspace{0.5em}

Then, we may extract the following (counter-)model $\model = \langle \Pi, \sigma, \delta \rangle$ from the top, saturated sequent in the proof above.
\vspace{-1em}
\begin{itemize}
    \item $\Pi := \{\pi_{0}, \pi_{1},\pi_{2},\pi_{3}\}$;
    \item $\sigma(\dst) := \Pi$, $\sigma(s) := \{\pi_{2}\}$, and $\sigma(s') :=  \{\pi_{1}\}$;
    \item $\delta(p) := \{\pi_{0}, \pi_{2}, \pi_{3}\}$. %, \delta(\neg p) := \{\pi_{1}\}$.
\end{itemize}

%\vspace{-0.1em}
\noindent It is readily verifiable that $\model, \pi_{0} \not\models \spbox{s'}p \lor \spdia{s} \neg p$.
\end{example}

We now show that $\proofsearch$ (and hence $\prove$) terminates after at most polynomially many rule applications in the \emph{size} of the input sequent. For an input $\ns := \sprset \sar \phi$, its size is defined to be $\size{\ns} := \card{\spset} + \len{\phi}$. That is, the size of $\ns$ is the sum of the cardinality of the set $\spset$ of standpoints and the %complexity 
 size of $\phi$. The size of a sequent incorporates a measure on the set $\spset$ from the associated vocabulary $\vocab$ as opposed to a measure on the set $\sprset$ of sharpening statements because $\sprset$ only plays a role in bottom-up applications of $\spdiari{s}$ and $\spdiarii{s}$, which are bounded in part by the cardinality of $\spset$ and in part by the number of $\spbox{s}$ modalities occurring in $\phi$, as explained in the proof of \thm~\ref{thm:termination} below. %Hence, it is $\spset$ and $\phi$ that play a fundamental role in the number of inferences in a thread generated by $\prove$.

\begin{theorem}[Termination]\label{thm:termination}
Let $\ns := \sprset \sep \phi$ be a sequent. Then, the number of recursive calls in $\proofsearch(\sprset \sep \clr{\phi})$, and thus $\prove(\sprset \sep \phi)$, is bounded by a polynomial

\vspace{-0.8em}
$$
p(\size{\ns}) = \bigo(|\ns|^{2}).
$$
%polynomially many (bottom-up) rule applications in the size of its input.
\end{theorem}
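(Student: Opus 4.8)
The plan is to bound the number of recursive calls of $\proofsearch(\sprset \sep \clr{\phi})$, which, because the fixed coloring $\clr{\phi}$ forces $\conr$ to descend into only one conjunct and every branch of the algorithm ends in a single tail call, coincides with the length of the one thread generated for that coloring. I would run a counting argument resting on two invariants that are immediate from inspecting the pseudocode: every colored formula ever placed in a component is a colored subformula of $\clr{\phi}$, of which there are at most $\len{\phi}$; and neither a formula nor a nesting is ever deleted, so this data only accumulates.

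First I would bound the number $N$ of components (nestings) that can ever occur. New components arise only from $\nr{s}$ (lines 34--37) and $\spboxr{s}$ (lines 30--33). The rule $\nr{s}$ fires for a standpoint $s$ only while no $s$-nesting exists, and nestings are never removed, so it contributes at most one component per standpoint, hence at most $\card{\spset}$ overall. For $\spboxr{s}$ the point is that a fixed active box formula $(\spbox{s} \clr{\psi})^{\act}$ can trigger the rule only while no existing $s$-nesting contains $\clr{\psi}$; the firing creates $(s)[\clr{\psi}]$, and since $\clr{\psi}$ is never removed, the side condition fails for that formula forever after. Thus each box subformula occurrence of $\phi$ induces at most one $\spboxr{s}$ creation, bounding these by $\len{\phi}$. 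Adding the root $\Sigma_{0}$ gives $N \leq 1 + \card{\spset} + \len{\phi} = \bigo(\size{\ns})$.

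I would then conclude by a slot-counting argument. Call a pair $(i, \clr{\chi})$ a \emph{slot} whenever $0 \leq i \leq N$ and $\clr{\chi}$ is a colored subformula of $\clr{\phi}$; there are at most $(N+1)\cdot\len{\phi} = \bigo(\size{\ns}^{2})$ slots. Every recursive call except $\nr{s}$ --- namely $\disr$, $\conr$, $\spdiari{s}$/$\spdiarii{s}$, $\spdiar{\dst}$, and $\spboxr{s}$ --- inserts at least one colored subformula into some component, and its guard first checks that the formula is absent there; since nothing is ever removed, each such call fills a hitherto empty slot, and no slot is filled twice. So there are at most $\bigo(\size{\ns}^{2})$ of these calls, while the $\nr{s}$ calls number at most $N = \bigo(\size{\ns})$ by the previous paragraph. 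Summing yields the total bound $\bigo(\size{\ns}^{2})$.

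The step I expect to be the main obstacle is the component bound for $\spboxr{s}$: in contrast to $\nr{s}$, this rule can spawn several nestings all labelled by the same standpoint $s$, so the crude ``one per standpoint'' estimate fails. The fix is to see that distinct $s$-nestings must carry pairwise distinct box-contents, so their number is controlled by the count of box subformulae of $\phi$ rather than by $\card{\spset}$; this is precisely what keeps $N$ linear in $\size{\ns}$ and hence the whole procedure quadratic. A minor point worth checking is that the diamond rules $\spdiari{s}$/$\spdiarii{s}$, whose applicability turns on the sharpening closure $\sprel$, do not break the count: each application merely copies a diamond's argument into one more component, which is itself a slot-fill and is therefore already subsumed by the $(N+1)\cdot\len{\phi}$ estimate.
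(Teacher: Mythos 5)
Your proof is correct and follows essentially the same route as the paper's: you first bound the number of components by $1 + \card{\spset} + \len{\phi}$ via the same analysis of $\nr{s}$ and $\spboxr{s}$, and then bound all remaining rule applications by a product of components and subformulae of $\phi$. Your slot-counting of pairs $(i,\clr{\chi})$ is just the transposed view of the paper's count (each occurrence of $\lor$, $\land$, $\spdia{s}$ fires at most once per component, hence at most $K$ times), with the monotonicity invariant made explicit; both yield the same $\bigo(\size{\ns}^{2})$ bound.
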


\vspace{-0.8em}
\begin{proof} Let $\ns := \sprset \sep \phi$ be a nested sequent, and $N_{\oplus}$ be the number of occurrences of the connectives $\{\lor,\land\} \cup \{\spdia{s} \ | \ s \in \spset\}$ in $\phi$. By the saturation conditions (\dfn~\ref{def:saturation-conditions}), we know that for each $s \in \spset$, the  $\spboxr{s}$ rule will be applied bottom-up at most one time for each occurrence of $\spbox{s}$ in $\phi$, which are bounded by $|\phi|$. Also, $\nr{s}$ will be applied at most once for each $s \in \spset$. Since only $\spboxr{s}$ and $\nr{s}$ introduce nestings, the number of components (i.e. the nestings plus the root) throughout the course of proof-search is bounded by:

\vspace{-0.8em}
$$
K := 1 + |\spset| + |\phi|
$$

%\vspace{-0.2em}
For each occurrence of $\lor$, $\land$, and $\spdia{s}$ in $\phi$ (with $s \in \spset$), we know by the saturation conditions that $\disr$, $\conr$, $\spdiari{s}$, $\spdiarii{s}$, and $\spdiar{*}$ can be applied a maximum number of $K$ times during proof search. Then, since $N_{\lor} + N_{\land} + \sum_{s \in \spset} N_{\spdia{s}}\leq|\phi|$, the number of recursive calls (i.e. bottom-up applications of rules) during proof-search is bounded by $N := |\phi| \cdot K$.
Finally, $\card{\spset}, \len{\phi} \leq \size{\ns}$ holds trivially, implying:

\vspace{-0.8em}
$$
N \leq \size{\ns} \cdot (1 + \size{\ns} + \size{\ns} ) 
$$

%\vspace{-0.2em}
Therefore, it follows that a polynomial $p(\size{\ns}) = \bigo(\size{\ns}^{2})$ bounds the number of recursive calls of $\prove(\ns)$. 
\end{proof}

\begin{corollary}\label{cor:decid-FMP-complexity} Let $\vocab$ be a vocabulary. Then,
\begin{enumerate}
    \item $\spv$ is decidable;
    \item $\spv$ has the finite model property;
    \item $\prove$ is worst-case complexity-optimal, deciding the validity problem for $\spv$ in $\conp$;
    \item The validity problem for $\spv$ is $\conp$-complete.
\end{enumerate}
\end{corollary}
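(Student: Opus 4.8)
The plan is to derive all four items from the correctness (\thm~\ref{thm:correctness}) and termination (\thm~\ref{thm:termination}) results already established, supplemented by a reduction from propositional validity for the hardness bound. For item~(1), I would first extract from \thm~\ref{thm:correctness} a clean dichotomy on any sequent $\sprset \sar \phi$: either $\proofsearch(\sprset \sar \clr{\phi}) = \true$ for \emph{every} proper coloring $\clr{\phi} \in \prc{\phi}$, in which case a proof in $\nspv$ exists and $\sprset \sar \phi$ is valid; or $\proofsearch(\sprset \sar \clr{\phi}) = \false$ for \emph{some} proper coloring, in which case a counter-model exists and $\sprset \sar \phi$ is invalid. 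Since $\prc{\phi}$ is finite (\dfn~\ref{def:colored-formula}) and each call to $\proofsearch$ halts after finitely many recursive calls by \thm~\ref{thm:termination}, running $\proofsearch$ over all proper colorings of $\phi$ is a terminating decision procedure, giving decidability.

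For item~(2), I would appeal directly to the second clause of \thm~\ref{thm:correctness}. Whenever $\sprset \sar \phi$ is invalid, that construction yields a standpoint model $\model = \langle \Pi, \sigma, \delta \rangle$ with $\Pi = \{\pi_{0}, \myDots, \pi_{n}\}$ such that $\model, \pi_{0} \not\models \phi$. The set $\Pi$ of precisifications is finite; indeed, by the component bound in \thm~\ref{thm:termination} its cardinality is polynomial in $\size{\ns}$. Hence every non-valid formula is refuted in a finite (in fact, polynomially bounded) model, which is precisely the finite model property.

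For items~(3) and~(4), I would treat membership and hardness separately and then combine them. For the $\conp$ upper bound, I would show the complementary problem (invalidity) lies in $\np$: on input $\sprset \sar \phi$, non-deterministically guess a proper coloring $\clr{\phi}$, whose description is linear in $\len{\phi}$, and then verify that $\proofsearch(\sprset \sar \clr{\phi}) = \false$. The point requiring care is that this verification runs in deterministic polynomial time. By \thm~\ref{thm:termination} the number of recursive calls is $\bigo(\size{\ns}^{2})$; moreover each call extends the colored nested sequent by a bounded amount of data drawn from the subformulae of $\clr{\phi}$, and the number of components is bounded by $K = 1 + \card{\spset} + \len{\phi}$, so every sequent arising during the search stays polynomial in $\size{\ns}$. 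Computing the closure $\sprel$ from $\sprset$ and $\spset$ and evaluating the saturation conditions and rule side conditions is then polynomial per step, so invalidity is in $\np$ and validity is in $\conp$. For $\conp$-hardness, I would embed classical propositional validity: a modality-free formula of $\langv$ is exactly a propositional formula in negation normal form, and since the truth of such a formula at a precisification $\pi$ depends only on $\delta$ restricted to $\pi$, it is valid in $\spv$ \iffi it is a classical tautology. As tautology-checking over NNF formulae is $\conp$-complete, this is a polynomial-time reduction witnessing $\conp$-hardness. Combining the bounds gives $\conp$-completeness (item~(4)); and since $\prove$ decides the problem within $\conp$ while the problem is $\conp$-hard, $\prove$ is worst-case complexity-optimal (item~(3)).

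The main obstacle I anticipate is the polynomial-time bound underlying the $\conp$ membership: \thm~\ref{thm:termination} controls only the \emph{number} of recursive calls, so I must additionally verify that each intermediate colored nested sequent remains polynomially sized and that each rule-application step (including the computation of $\sprel$ and the evaluation of the saturation and side conditions) costs only polynomial time. Once this bookkeeping is in place, the remaining items follow cleanly from the already-established correctness and termination theorems.
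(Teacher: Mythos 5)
Your proposal is correct and takes essentially the same route as the paper: items 1 and 2 are drawn from the correctness (\thm~\ref{thm:correctness}) and termination (\thm~\ref{thm:termination}) theorems, the $\conp$ upper bound comes from guessing a coloring and noting that the resulting thread has polynomially many steps each of polynomial size, and hardness/optimality comes from the fact that classical propositional validity (a $\conp$-complete problem) is subsumed as the modality-free case. The only differences are cosmetic: you derive item 3 from item 4 whereas the paper states optimality first and gets $\conp$-completeness as an immediate consequence, and you make explicit some per-step bookkeeping (polynomial computation of $\sprel$ and of the saturation checks) that the paper leaves implicit.
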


\begin{proof} Statements 1 and 2 follow from the fact that $\prove$ is a correct (\thm~\ref{thm:correctness}) and terminating (\thm~\ref{thm:termination}) decision procedure for $\spv$ that, in particular, returns a finite counter-model when the input is invalid. 

To show statement 3, observe that $\prove$ is a non-deterministic algorithm that takes a sequent $\ns := \sprset \sar \phi$ as input, guesses a proper coloring of $\phi$, and constructs a thread. Each such thread is polynomial in the size of its input, since the number of rule applications (i.e. the length of the thread) is bounded by a polynomial $p(\size{\ns}) = \bigo(\size{\ns}^{2})$, by \thm~\ref{thm:termination}. Moreover, since any sequent generated during proof-search can have at most $K \leq 1 + |\spset| + |\phi|$ many components (as stated in the proof of \thm~\ref{thm:termination}), each of which can only be inhabited by at most $\card{\sufo{\phi}} = \len{\phi}$ many formulae, it follows that the size of each nested sequent in the thread is bounded by $\bigo(|\ns|^{2})$ since $\card{\spset}, \len{\phi} \leq \size{\ns}$. Taking the functionality of $\prove$ into account, one can see that if $\prove(\ns) = \false$, then the corresponding thread is generated in polynomial time and its size is bounded above by a polynomial $q(\size{\ns}) = \bigo(\size{\ns}^{4})$. Additionally, note that $\prove$ is worst-case complexity-optimal as the validity problem for classical propositional logic is $\conp$-complete, and can be solved by $\prove$ as $\id$, $\disr$, and $\conr$ form a sound and complete proof system for propositional logic (cf.~\cite[\app~B]{Lyo21thesis}).
Last, statement 4 is an immediate consequence of statement 3.
\end{proof}

%%%Conclusion
\section{Conclusion and Future Work}\label{sec:conclusion}

In this paper, we introduced and employed nested sequent systems to automate reasoning with propositional standpoint logics. To obtain worst-case complexity-optimal proof-search, we presented a novel proof-search technique, referred to as \emph{coloring}, whereby the subformulae of an input formula are non-deterministically colored with (in)active labels, yielding partial proofs (i.e. \emph{threads}) of the input. By means of our technique, we designed a non-deterministic proof-search algorithm deciding the validity of standpoint implications in $\conp$, showing how certain threads could be transformed into a counter-model for an invalid input, and how all threads could be transformed into a proof for a valid input. The attainment of these ``certificates'' from proof-search serve as explanations for the (in)validity of standpoint formulae, thus motivating our proof-theoretic approach.

For future work, we aim to extend our nested systems and proof-search algorithm to cover (i) first-order standpoint logics that (ii) incorporate complex standpoints, which have interesting applications in knowledge integration scenarios. Regarding point (i), placing standpoint logic on a first-order base increases the applicability of the framework along with its expressivity to better match that of contemporary knowledge representation languages. Our focus in this area is to provide results that can then be extrapolated to widely used decidable fragments of FOL.
Regarding point (ii), we note that the set-theoretic interpretation of standpoints permits the definition of complex standpoints built atop atomic ones; e.g. union $s_{1} \cup s_{2}$ (integrating knowledge from multiple perspectives), intersection $s_{1} \cap s_{2}$ (expressing the knowledge jointly shared between multiple perspectives), and difference $s_{1} \setminus s_{2}$ (yielding the sharpening of $s_{1}$ by ignoring all precisfications of $s_{2}$). Beyond providing nested systems for more expressive formulations of standpoint logic, we also aim to write and evaluate theorem provers based on our nested calculi. % to apply standpoint-logical reasoning in practice.

% What are the chances to use the calculi for explainability? 

%%%Acknowledgements
\noindent
\acknowledgments{Lucía Gómez Álvarez was supported by the Bundesministerium fur Bildung und Forschung (BMBF, Federal Ministry of Education and Research) in the Center for Scalable Data Analytics and Artificial Intelligence (ScaDS.AI). Tim S. Lyon has received funding from the European Research Council (Grant Agreement no. 771779, DeciGUT). }

%Work supported by the European Research Council (ERC) Consolidator Grant 771779 \textit{A Grand Unified Theory of Decidability in Logic-Based Knowledge Representation} (DeciGUT).

%%%Bibliography
\bibliographystyle{kr}
\bibliography{bibliography,LuciaRefs}

%\newpage
%%%Appendix
%\appendix

%\section{Proofs}

%\input{appendix}

\end{document}